\documentclass{article}
\usepackage[margin=1.3in]{geometry}
\usepackage{cite}
\usepackage{amssymb}
\usepackage{amsmath}
\usepackage{amsthm}
\usepackage[T1]{fontenc}
\usepackage{latexsym}
\usepackage{mathrsfs}
\usepackage{tikz}
\usepackage{hyperref}
\usepgflibrary{arrows}
\usetikzlibrary{matrix,arrows,calc}
\usepackage{xcolor}

\usepackage[caption=false,font=normalsize,labelfont=sf,textfont=sf]{subfig}
\usepackage{subfiles}
\usepackage[capitalize]{cleveref}

\usepackage{booktabs}

\newtheorem{thm}{Theorem}
\newtheorem{prop}{Proposition}

\newtheorem{dfn}{Definition}
\newtheorem{cor}{Corollary}

\theoremstyle{definition}

\newcommand{\dd}{\,\textrm{d}}
\newcommand{\R}{\mathbb{R}}

\renewcommand{\L}{\mathcal{L}}
\newcommand{\F}{\mathcal{F}}

\newcommand{\autocite}[1]{#1}

\definecolor{ggplot1}{RGB}{27,158,119}
\definecolor{ggplot2}{RGB}{217,95,2}
\definecolor{ggplot3}{RGB}{117,112,179}
\definecolor{ggplot4}{RGB}{231,41,138} 
\title{Local Independence Testing for Point Processes}

\author{Nikolaj Thams, Niels Richard Hansen
\thanks{The work presented in this article is supported in part by Novo Nordisk Foundation Grant NNF20OC0062897 and in part by VILLUM FONDEN Grant 18968. \textit{(Corresponding author: Nikolaj Thams.)}}
\thanks{N. Thams and N. R. Hansen are with the Department of Mathematics, University of Copenhagen, Denmark (e-mail: thams@math.ku.dk; niels.r.hansen@math.ku.dk)}
}

\begin{document}
\maketitle
\begin{abstract} Constraint based causal structure learning for point processes require empirical tests of local independence. Existing tests require strong model assumptions, e.g. that the true data generating model is a Hawkes process with no latent confounders.
Even when restricting attention to Hawkes processes, latent confounders are a major technical difficulty because a marginalized process will generally not be a Hawkes process itself. We introduce an expansion similar to Volterra expansions as a tool to represent marginalized intensities. Our main theoretical result is that such expansions can approximate the true marginalized intensity arbitrarily well. Based on this we propose a test of local independence and investigate its properties in real and simulated data. 
\end{abstract}

\section{Introduction}
Hawkes
processes are models of time-dynamic interacting point processes with applications in such diverse areas as finance \cite{Bacry:2015}, seismology \cite{Ogata:2013}, social science \cite{Zhou:2013} and neuroscience \cite{Truccolo:2005}. Hawkes proposed himself that his model for self- and mutually exciting point processes could be applied as a model of epidemic spread and neuron firing among other things \cite{Hawkes:1971}, and with reference to Hawkes' pivotal work the model has taken the name \emph{Hawkes process} in the literature. Specifically, Hawkes introduced the multivariate linear Hawkes process, which together with its nonlinear extension \cite{Bremaud:1996} have become the most widely applied models of multivariate dynamic point processes.

It is straightforward to define -- in purely mathematical terms -- whether one event type in a Hawkes process affects another event type. This defines a network, and our main objective is to test hypotheses regarding network connectivity. Constraining the network structure to be sparse can have well known statistical and computational benefits, e.g. a favourable bias-variance tradeoff for large networks and fast data fitting algorithms \cite{Hansen:2015b,Chen:2017}. However, it is much less obvious if the network structure allows for a subject matter interpretation beyond the purely statistical one. In particular, if the network conveys causal information.

We will use Hawkes process models of neuron spike activity as a main motivating example, and we will discuss the question of causal discovery in this context, though our results are of a general nature. Hawkes processes have a long history in neuron science with Brillinger using them some 45 years ago for the first time \cite{Brillinger:1975, Brillinger:1976}. Early applications relied on moment identities and spectral methods, but likelihood methods later became computationally feasible and widely used \cite{Brillinger:1992, Brown:2004aa, Truccolo:2005, Pillow:2008,Hansen:2015a}. The Hawkes processes have served several objectives, from a statistical characterization of dependencies among correlated neurons to a vehicle for sensory decoding from neuron ensembles, and, more recently, as a way to learn a sparse network structure among the neurons \cite{Masud:2011, Song:2013}.
According to \cite{Song:2013} the Hawkes process can identify the \emph{functional connectivity} of a neural network, but the network ``cannot be directly interpreted as synaptic connections'' -- yet the model's attractiveness was from the very beginning tied to its physiological interpretability as representing synaptic integration \cite{Brillinger:1976}. Moreover, functional connectivity was interpreted in \cite{Song:2013} as a \emph{causal relation}, and understanding the extent to which this interpretation is justified was a main motivation for the work presented in this paper.

The notion of a ``causal relation'' was left undefined by \cite{Song:2013}, and it is possible that it was only meant in the  weak sense of Granger causality as considered earlier by e.g. \cite{Kim:2011} for neuron spike activity. Irrespectively, it is of interest to understand if stronger causal interpretations are possible, e.g. identification of intervention effects as expressed by \cite{Didelez:2008, Didelez:2015} and \cite{Roysland:2012}. The methods proposed by \cite{Song:2013}, as well as most methods in the statistical literature \cite{Hansen:2015b, Hall:2016, Eichler:2017,   Chen:2017, Achab:2017}, result in networks that only allow for a strong causal interpretation by assuming that all variables of the system are observed. Causal structure learning algorithms like Meek's Causal Analysis (CA) algorithm \cite{Meek:2014} likewise require all variables observed, but recent constraint based learning algorithms  \cite{Mogensen:2018uai, Mogensen:2020} do allow for a strong causal interpretation of the resulting network even in the presence of latent confounders. The algorithm by \cite{Mogensen:2018uai} is related to FCI for acyclic causal structures \cite{Spirtes:1993}, but it is adapted to cyclic graphs that can represent time-dynamic feedback mechanisms. 
Where FCI and other algorithms for acyclic graphs are relying on tests of conditional independence, cyclic graphs of time-dynamic systems are based  on (conditional) \emph{local independence} \cite{Didelez:2008,Mogensen:2020}, and causal discovery algorithms require empirical tests of this asymmetric independence criterion.

In this paper, we propose a test of local independence in point process data. Let $j$ and $k$ denote two types of events, e.g. the firing of two different neurons, and let $C$ denote a set of event types, e.g. a set of neurons. The hypothesis that $k$ is locally independent of $j$ given $C$ is denoted $j \not\rightarrow k \mid C$. We test the hypothesis by testing whether events of type $j$ contribute significantly to the intensity of $k$ given events of type $C$. 
We approximate point process intensities by basis expansions and propose to use higher-order interactions terms of events to fit intensities, such that the intensity does not only take into account single events (as is the case for Hawkes processes), but also pairs or triples of events. We show that 
higher-order interactions can be captured through iterated integrals and that any intensity can be arbitrarily well approximated by including enough higher-order terms, analogous to Volterra expansions in dynamical systems \cite{volterra1959theory,franz2006unifying}.

Our main motivation for this nonparametric expansion is that Hawkes processes are not closed under marginalization, meaning that a subcollection of event types of a Hawkes process need not be a Hawkes process. 
Consequently, if some event types of a Hawkes process are unobserved, we may not be able to model it by a Hawkes process (that is, by using only first-order terms of events). 
Even if all processes are observed, constraint based learning algorithms \cite{Meek:2014,Mogensen:2018uai} construct a local independence graph by testing $j \not\rightarrow k \mid C$ within a (typically small) subcollection of event types, in effect corresponding to testing local independence when marginalizing away everything else than $j$, $k$ and $C$.
The model misspecification arising from the assumption that the marginalized processes are Hawkes may result in tests that do not have asymptotic level. By including higher-order interactions in our tests, this model misspecification is reduced, such that the null hypotheses of local independence are rejected less often, resulting in sparser and more correct graphs. 

\subsection{Structure of this article}
In \cref{sec:background} we outline the existing theory on Hawkes processes and local independence. \Cref{sec:volterra} contains our main theoretical result, that intensities can be approximated arbitrarily well by including higher-order interaction terms. We apply this approximation in \cref{sec:test} to construct a test of local independence. 
In \cref{sec:simulation} we evaluate the test in simulation studies, and in \cref{sec:real-world} we apply the test in causal learning algorithms to learn network structure in a neuron spiking data set.

\section{Hawkes processes and local independence}\label{sec:background}
In this section we first give a brief introduction to Hawkes processes (see \cite{Daley:2003} for a more thorough introduction). We then introduce local independence graphs and tests of local independence.

\begin{figure*}[t]
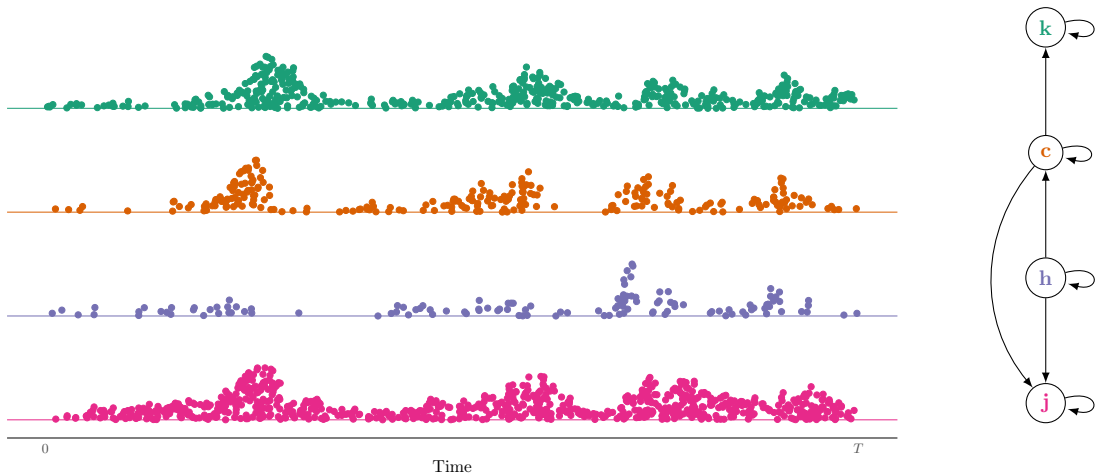

    \subfloat{
	\centering
        \resizebox{0.8\textwidth}{!}{
%
		\label{intro:fig_data_set_b}%
    }
	\caption{(\textit{left}): Data from a $4$-dimensional Hawkes process. The vertical position of points reflect the local frequency of points. (\textit{right}): Local independence graph (see \cref{subsec:loc-indep-graph}) of the process that generated the data. }
	\label{intro:fig_data_set}
\end{figure*}

Let $V = \{1, \ldots, d\}$ and let $N = (N^k)_{k \in V}$ denote a collection of point processes on $\mathbb{R}$ indexed by $V$. Each mark $k\in V$ represents a particular type of event, and $N$ is also referred to as a marked point process \cite{Daley:2003}.
More formally, if for each $k = 1, \ldots, d$, we let $\{\ldots, \tau_{-1}^k, \tau_{0}^k, \tau_1^k, \tau_2^k,\ldots\}$ be a series of event times, the point process $N^k$ is defined as the random measure
\begin{equation*}
    N^k(A) = \sum_{i} \delta_{\tau^{k}_i}(A),
\end{equation*}
where $\delta_t(A)$ is a Dirac measure.
We associate with the \mbox{$k$-th} point process the counting process $N_t^k = N^k((0,t])$. We will assume that $N$ is simple and non-exploding, meaning that event times are distinct and any finite interval only has finitely many points. 

For a point process $N$, the intensity $\lambda_t = (\lambda_t^1, \ldots, \lambda^d_t)$ describes the conditional rate of new events at time $t$,
\begin{align*}
    \lambda^k_t = \mathbb{E}(N^k(\dd t) \mid \mathcal{F}_{t-}^V),
\end{align*}
where $\F^V_{t-}$ is the predictable filtration generated by $N^1, \ldots, N^d$, i.e. $\F^V_{t-}$ is the history of events strictly prior to time $t$ of any type $j \in V$.

Let $g^{jk} : [0,\infty) \rightarrow [0,\infty)$ for $j,k \in V$ be integrable
functions, which we call kernels. We introduce the intensity process 
\begin{align} 
    \lambda_t^k &= \beta_0^k + \sum_{j \in V} \sum_{i: \tau^j_i < t} g^{jk}(t-\tau^j_i) \nonumber \\
    &=\beta_0^k + \sum_{j \in V} \int_{-\infty}^{t-} g^{jk}(t-s) N^{j}(\mathrm{d} s)\label{eq:Hawkes_int}
\end{align}
where we call $\beta_0^k \geq 0$ baseline intensities.

\begin{dfn} A $d$-dimensional point process with intensity processes
  $\lambda^k$, $k \in V$, as defined by \eqref{eq:Hawkes_int} is called a multivariate linear Hawkes process with kernels $g^{jk}$ and
  baseline intensities $\beta_0^k$.
\end{dfn}

In this paper, we only consider stationary Hawkes processes. If we define
\begin{equation}\label{eq:integrated-kernel}
    \mathbf{g}^{jk} = \int_{0}^{\infty} g^{jk}(t) \mathrm{d} t,
\end{equation}
and define the matrix $G = (\mathbf{g}^{jk})_{j,k}$, stationarity for the linear Hawkes process is achieved if the spectral radius of $G$ is strictly smaller than $1$, see
\cite{Hawkes:1974} and Chapter 6 in \cite{Daley:2003}. 

The linear Hawkes process can be extended to \emph{the nonlinear Hawkes process} using a link function $\eta$:
\begin{equation*}
    \eta(\lambda_t^k) = \beta_0^k + \sum_{j \in V} \int_{-\infty}^{t-} g^{jk}(t-s) N^{j}(\mathrm{d} s).
\end{equation*}
Useful alternatives to $\eta(x) = x$ are $\eta(x)=\log(x)$ or \mbox{$\eta(x) = 1_{x\geq 1}\cdot x + 1_{x<1}\cdot (\log(x)+1)$.} In both cases, $\eta^{-1}$ maps $\R$ into $[0, \infty)$, which ensures that $\lambda^k_t \geq 0$ even if the kernels are allowed to take negative values. In the following subsection, we only discuss marginalization in the linear Hawkes process. However, the approximation result in \cref{sec:volterra} extends readily to nonlinear processes, and so we state that result in generality.

\subsection{Marginalization in Hawkes processes}\label{subsec:marginalization}
If we only observe events corresponding to marks in $V' \subset V$,
the distribution of the $V'$-events is a marginalization of the 
distribution of $V$-events. Even if all event types of a system are observed, the local independence statement $j \not\rightarrow k \mid C$ relates to the marginal distribution of $N^j, N^k$ and $N^C$, so when $\{j,k\}\cup C \neq V$, we test local independence in a marginalized distribution.

This creates a problem for testing, because many model classes, including Hawkes processes, are not closed under marginalization, i.e. the marginalized distribution need not be in the same model class as before. 
We explore the case of marginalized Hawkes processes in more detail. 

For $C \subseteq V$ let $\mathcal{F}^C_{t-} := \cup_{s<t}\F_s^C$ denote the predictable filtration generated by $N^j$ for $j \in C$, and let $E( \cdot \mid \mathcal{F}^C_{t-})$ denote expectations given only information about events strictly prior to $t$ of types $C$.\footnote{Technically, $E( \cdot \mid \mathcal{F}^C_{t-})$ is the predictable projection operator in order to have regular sample paths of the resulting stochastic process. See the remark in \cite{florens1996noncausality} for a discussion of this.} Suppose that $k \in C$, then by the innovation theorem (see e.g. \cite{jacobsen2006point}) the $\mathcal{F}^C_{t-}$-intensity of $N^k$ is  
\begin{equation*}
    \lambda_t^{k,C} = E( \lambda_t^{k} \mid \mathcal{F}^C_{t-}).
\end{equation*}
We will refer to this as the $C$-intensity. 
For the linear Hawkes process we have from \eqref{eq:Hawkes_int} that
\begin{align} 
    \lambda_t^{k,C} &= \beta_0^k + \sum_{j \in C} \int_{-\infty}^{t-}
    g^{jk}(t-s) N^{j}(\mathrm{d} s) \label{eq:Hawkes_filt}\\
    &\quad + \sum_{l \in C^c} \int_{-\infty}^{t-} g^{lk}(t-s) E(N^{l} \mid \mathcal{F}^C_{t-}) (\mathrm{d} s), \nonumber
\end{align}
thus for a complete computation of the $C$-intensity we need to
compute $E(N^{l} \mid \mathcal{F}^C_{t-})$, which is a classical
filtering problem. The solution can be characterized via general
filtration equations, see \cite{arjas1992filtering} and \cite{last1995marked}.

We could approximate the solution of the filtering problem by a linear filter
\begin{align*}
    &E(N^{l} \mid \mathcal{F}^C_{t-})(\mathrm{d}s) \simeq \\ 
    & \quad \bigg(\gamma^l_0 \quad + \sum_{j \in C} \int_{-\infty}^{t-} h^{jl}(t - u, t - s) N^j(\mathrm{d}u) \bigg) \mathrm{d}s,
\end{align*}
for a choice of kernels $h^{jl}$. Using this, we arrive at the following \emph{approximate} $C$-intensity 
\begin{equation} \label{eq:Hawkes_filt_approx}
    \tilde{\lambda}_t^{k,C} = \tilde{\beta}_0^{k,C} + \sum_{j \in C} \int_{-\infty}^{t-}\tilde{g}^{jk, C}(t-s) N^{j}(\mathrm{d} s) 
\end{equation}
where 
$$\tilde{\beta}_0^{k, C} = \beta_0^k + \sum_{l \in C^c} \int_0^{\infty}
\gamma_0^l g^{lk}(t) \mathrm{d} t$$
and
$$\tilde{g}^{jk, C}(t) = g^{jk}(t) + \sum_{l \in C^c} \int_{0+}^{\infty}
h^{jl}(t, s) g^{lk}(s) \mathrm{d}s$$ 
for $j \in C$. We recognise \eqref{eq:Hawkes_filt_approx} as being the intensity for a linear Hawkes process over event types indexed by $C$.
However, this is only an approximation, and the marginalized process will generally not be a linear Hawkes process. Thus some effects of this \emph{model misspecification} should be expected if we fit a model of the form
\eqref{eq:Hawkes_filt_approx} to marginalized data. 

\subsection{Local independence hypotheses}
Following \cite{Mogensen:2020}, we define local independence for a point process $N$ by saying that $N^k$ is locally independent of $N^j$ given $N^C$ if $\lambda^{k, C\cup\{j\}}$ has an $\mathcal{F}^{C}_{t}$-predictable version. Intuitively that means that $\lambda^{k, C\cup\{j\}}$ only depends on events in $N^C$ and not $N^j$. In this case, we write $j \not \rightarrow k \mid C$, and else (if $\lambda_t^{k,C}$ is not a version of $\lambda_t^{k,C\cup \{j\}}$) we write $j \rightarrow k \mid C$. 

Our goal is to test the local independence hypothesis, 
\begin{align*}
    H_0: j \not \rightarrow k \mid C.
\end{align*}
In the approximate $C$-intensity from \cref{eq:Hawkes_filt_approx} this hypothesis corresponds  to $\tilde{g}^{jk, C \cup \{j\}}$ being 0. 
However, a test of $\tilde{g}^{jk, C \cup \{j\}} = 0$ as a surrogate for $H_0$ comes with no guarantee on the level due to the model misspecification of $\tilde{\lambda}^{k,C}$.

Instead of relying on the first-order approximation in \cref{eq:Hawkes_filt_approx} for $\lambda^{k,C}$, we propose to base 
the test on the approximation 
\begin{equation}\label{eq:approximate_intensity}
    \overline{\lambda}^{k, C \cup \{j\}}_t = \lambda^{k, C}_t + \int_{-\infty}^{t-}
    \overline{g}^{jk}(t - s) N^j(\mathrm{d} s)
\end{equation}
of the $ C \cup \{j\}$-intensity $\lambda^{k, C\cup\{j\}}$. This 
approximation only uses a linear filter to model the contribution from $j$
under the alternative, and under $H_0$ the model with $\overline{g}^{jk} = 0$ is, in fact, correctly specified.
Thus we will carry out tests of $H_0$ by testing $\overline{g}^{jk} = 0$. A major practical and technical challenge is to approximate and fit $\lambda^{k, C}$ sufficiently well for the test to maintain level, and we dedicate \cref{sec:volterra} to developing methods for appropriately fitting $\lambda^{k, C}$. 

\subsection{Local independence graphs}\label{subsec:loc-indep-graph}
In this paper, we consider tests for local independence, with the motivation of learning graphical representations of causal relations in point processes.
In particular, we consider the local independence graph for point processes, introduced by Didelez in \cite{Didelez:2008}, where the absence of an edge $j\not\rightarrow k$ in the graph corresponds to the local independence $j\not\rightarrow k \mid V\setminus \{j,k\}$.

For the linear Hawkes process, the local independence graph is a graph with vertices $V$ and an edge $j \rightarrow k$ if and only if $\mathbf{g}^{jk} > 0$, where $\mathbf{g}^{jk}$ is defined in \cref{eq:integrated-kernel}. That is, there is an edge from $j$ to $k$ if and only if the kernel $g^{jk}$ is not constantly equal to 0. \Cref{intro:fig_data_set} displays data from a Hawkes process and the underlying local independence graph that was used to generate the data.

\section{Higher-order expansions}\label{sec:volterra}
\subsection{Motivating higher-order interactions}
In the following, we propose a general expansion of point process intensities, which we show to converge to the true intensity as the degree of the expansion approaches infinity. We intend to apply this to marginalized Hawkes processes, in order to remove the model misspecification discussed above, but the result does not rely on the process being Hawkes, and applies to any point process model.

The expansion utilizes iterated integrals, which already \cite{Brillinger:1975} used for specifying models with higher-order interactions. \cite{Cohen:2012} showed that the chaos expansion of point processes initiated at zero can approximate any measurable variable arbitrarily well, by integrals over random intervals. Similar to \cite{Cohen:2012} our proof relies on martingale convergence, but uses integrals over deterministic intervals. 

Iterated integrals are also used in the theory of Volterra series \cite{volterra1959theory}, where the dynamics of a time-homogeneous system over variables $x$ and $y$ is approximated by the $\mathit{L^{\textrm{th}}}$\textit{-order} expansion: 
\begin{align*}
y_t &\approx \beta^0 + \\&
\sum_{n=1}^L \int_{-\infty}^t \! \! \cdots \int_{-\infty}^t h^n(s_1, \ldots, s_n) x_{t-s_1} \cdots x_{t-s_n} \dd s_1 \cdots \dd s_n
\end{align*}
Under various regularity conditions, including continuity and finite memory of the system, this approximation will converge, that is, the right hand side converges to $y_t$ for all $t$ when $L$ tends to infinity \cite{ahmed1970closure,franz2006unifying}. Although point process systems are very different in nature to continuous systems, we show a similar expansion for point processes below. 

\subsection{Intensity representations}
We consider a fixed subset $C \subseteq V$, and a stationary process $N$. 
Let $C_n$ be the set of tuples $\alpha = (j_1, \ldots, j_n)$ of length $n$ where $j_i \in C$ and $j_{i_1} \leq j_{i_2}$ for $i_1 < i_2$. 
Further define
\begin{equation*}
    E_n = \mathcal{L}([0, \infty)^{n}, \mathbb{R})^{C_n},
\end{equation*}
where $\L([0,\infty)^n, \R)$ is the set of measurable functions $h: [0,\infty)^n \to \R$. 
That is, every element $(h^\alpha)_{\alpha\in C_n}$ in $E_n$ is a collection of functions, indexed by the distinct combinations of $C$. 
Also define the functional $\varphi_n^{t}$ on $E_n$:
\begin{equation*}
	 \varphi_n^{t}: (h^\alpha)_{\alpha \in C_n} \mapsto \sum_{\alpha \in C_n} \int_{(-\infty, t)} h^\alpha (t-s^n) N^\alpha(\dd s^n)
\end{equation*}
where 
\begin{align*}
&\int_{(-\infty, t)} g(t-s^n) N^\alpha(\mathrm{d}s^n) := \\ &\quad \int_{-\infty}^{t-} \cdots \int_{-\infty}^{t-} g(t - s_1, \ldots, t-s_n) N^{j_1}(\dd s_1) \cdots N^{j_n}(\dd s_n).
\end{align*}
Note that $\varphi_n^{t}$ maps into $\mathcal{L}(\mathcal{F}_{t-}^C)$ because for any $h:= (h^{\alpha})_{\alpha \in C_n}$, the filter $\varphi_n^{t}(h)$ is $\mathcal{F}_{t-}^C$-measurable.

For example, if $V = \{1, 2\}$ and $n = 2$, we have $C_n = \{(1,1), (1,2), (2,2)\}$, $E_n = \{(h^{(1,1)},h^{(1,2)}, h^{(2,2)})\mid h^{(i,j)}: [0, \infty)^2 \to \R \textrm{ measurable}\}$ and 
\begin{align*}
    &\varphi_n^{t}(h^{(1,1)},h^{(1,2)}, h^{(2,2)}) \\
    & = \int_{-\infty}^{t-}\int_{-\infty}^{t-} h^{(1,1)}(t-s_1, t-s_2) N^1(\dd s_1)N^1(\dd s_2) \\
    & + \int_{-\infty}^{t-}\int_{-\infty}^{t-} h^{(1,2)}(t-s_1, t-s_2) N^1(\dd s_1)N^2(\dd s_2) \\
    & + \int_{-\infty}^{t-}\int_{-\infty}^{t-} h^{(2,2)}(t-s_1, t-s_2) N^2(\dd s_1)N^2(\dd s_2),
\end{align*}
is the evaluation of the kernels $h^{(1,1)}, h^{(1,2)}$ and $h^{(2,2)}$ in all combinations of points in the respective event types $N^1$ and $N^2$.

We now show that we can approximate point process intensities by such sums of iterated integrals. 
We first show this for $t=0$ and then extend the result to all $t \in \R$ using time homogeneity. 
At $t=0$, we define the set $W_n$ of all $\mathcal{F}_{t-}^C$-measurable random variables, that can be written as a $n$-fold iterated integral and are almost surely finite: 
\begin{align*}
W_n = \{X \in \varphi_n^{0}(E_n) \mid |X| < \infty \text{ a.s} \}
\end{align*}

This allows us to state the following theorem, which is proven in the appendix. 
\begin{thm}\label{thm:representation_0}
With $\mathcal{F}_{0-}^C = \sigma(\cup_{s<0}\F_s^C)$ it holds that 
$\bigoplus_{n\in\mathbb{N}} W_n$ is dense in $\{X \in \mathcal{L}(\mathcal{F}_{0-}^C) \mid |X| < \infty \text{ a.s.}\}$ in the topology of convergence in probability. \footnote{i.e. the topology induced by the Ky Fan metric $$d(X, Y) = \inf\{\epsilon > 0 \mid P(|X - Y| > \epsilon) \leq \epsilon\}.$$}
\end{thm}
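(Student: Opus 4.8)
The plan is to prove density by a monotone-class / functional approximation argument together with a martingale convergence step. The key observation is that $\mathcal{F}_{0-}^C = \sigma(\cup_{s<0}\mathcal{F}_s^C)$, so it suffices to approximate random variables measurable with respect to $\mathcal{F}_{-a}^C$ for large $a$, i.e. variables depending only on the points of $N^j$, $j \in C$, in the window $[-a, 0)$. By the martingale convergence theorem, if $X \in \mathcal{L}(\mathcal{F}_{0-}^C)$ is a.s. finite, then $E(X \mid \mathcal{F}_{-a}^C) \to X$ a.s. (hence in probability) as $a \downarrow 0$ (reading $a$ toward $0$ from the past), so the first reduction is: it is enough to show that for each fixed $a > 0$ the iterated integrals with kernels supported in $[0,a]^n$ are dense in the a.s.-finite variables of $\mathcal{F}_{-a}^C$. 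Actually, since $a$ ranges up to $\infty$ and we want $\cup_a \mathcal{F}_{-a}^C$, the cleaner statement is: $\cup_{a>0}\mathcal{L}(\mathcal{F}_{-a}^C)$ (a.s.-finite members) is dense in $\mathcal{L}(\mathcal{F}_{0-}^C)$ (a.s.-finite members), by martingale convergence, so we fix $a$ and approximate within $\mathcal{F}_{-a}^C$.

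Next I would handle the fixed-window problem. On the window $[-a,0)$, conditionally on the total count being $m$ (across all types in $C$, with their marks), a simple non-exploding point process is described by finitely many random variables: the number of points of each type and their locations. So $\mathcal{F}_{-a}^C$ restricted to $\{N^C([-a,0)) = m\}$ is generated by these finitely many coordinates, and any a.s.-finite measurable $X$ restricted to this event is a measurable function of them; by standard real-analysis density (e.g. Stone--Weierstrass on a compactification, or direct polynomial/continuous-function approximation in probability) $X$ can be approximated on $\{N^C([-a,0))=m\}$ by polynomials in the event times with symmetric coefficients. The crucial algebraic point is that a symmetric polynomial in the points of $N^{j_1},\dots,N^{j_n}$, summed over all ordered selections, is exactly an iterated integral $\varphi_n^0(h)$ for a suitable product/polynomial kernel $h$ supported in $[0,a]^n$: for instance $\sum_{i\ne i'} f(\tau_i)f(\tau_{i'}) = \int\!\int f(-s_1)f(-s_2)N(\mathrm{d}s_1)N(\mathrm{d}s_2) - \int f(-s)^2 N(\mathrm{d}s)$, and more generally diagonal terms are lower-order iterated integrals, so monomials in the counting measures lie in $\bigoplus_n \varphi_n^0(E_n)$. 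Summing over $m$ (a countable partition, controlled because the process is non-exploding so $N^C([-a,0))<\infty$ a.s.) and truncating the tail $\{N^C([-a,0)) > M\}$ (which has probability $\to 0$) gives approximation in the Ky Fan metric by a finite sum of iterated integrals, i.e. an element of $\bigoplus_{n} W_n$. The finiteness condition defining $W_n$ is automatic here since the approximants are finite sums over finitely many points.

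The step I expect to be the main obstacle is making the ``approximate $X$ by polynomials in the points'' rigorous \emph{uniformly enough in $m$} to assemble a single global approximant in probability. The difficulty is that $X$ restricted to $\{N^C([-a,0))=m\}$ is an arbitrary measurable function of $2m$-ish real coordinates with no continuity, so one must first approximate $X$ by a bounded continuous function of those coordinates (using that the law of the coordinates is a finite measure, so $L^0$-density of continuous functions holds), then approximate that by polynomials on the compact window $[-a,0]^{\le m}$, and only then translate to iterated integrals; doing this simultaneously for all $m \le M$ with a uniform error budget $\epsilon/2^m$ or similar is the bookkeeping core of the argument. A secondary subtlety is the distinction between the raw conditional expectation and its predictable projection (flagged in the paper's footnote via \cite{florens1996noncausality}); I would note that since $\mathcal{F}_{-a}^C$ is a genuine (non-predictable) $\sigma$-algebra indexed by a countable cofinal sequence $a \downarrow 0$, ordinary reverse martingale convergence applies and the predictable-projection issue does not arise at this stage. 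Finally, once the $t=0$ case is established, the extension to general $t$ claimed in the surrounding text follows from stationarity (time-shift) of $N$, which I would invoke rather than reprove.
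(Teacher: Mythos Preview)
Your reduction matches the paper's: use martingale convergence to pass from $\mathcal{F}_{0-}^C$ to a fixed window $\mathcal{F}_{s}^C$ (the paper sends $s\to-\infty$; your ``$a\downarrow 0$'' is a slip, but you self-correct), then partition by the count $N(A)$ with $A=[s,0)$ and truncate using non-explosion. The divergence, and the actual content, is in how to realize $\lambda\,1_{\{N(A)=M\}}$ as an element of $\bigoplus_n W_n$.

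Here there is a genuine gap, not just bookkeeping. You propose to approximate $X$ on each $\{N(A)=m\}$ by a symmetric polynomial in the event times and then ``sum over $m$''. But an iterated integral does not localize to a given count: whatever element of $\bigoplus_n W_n$ you build to match $X$ on $\{N(A)=m\}$ takes uncontrolled values on $\{N(A)=m'\}$, so summing the per-$m$ approximants does not produce a global approximant. A ``uniform error budget $\epsilon/2^m$'' does not fix this---the problem is not the size of the errors but that the candidate elements are not supported on the right events. Your route can be repaired (for instance, on $\{N(A)\le M\}$ the indicator $1_{\{N(A)=m\}}$ equals the Lagrange polynomial $\prod_{0\le j\le M,\,j\ne m}(N(A)-j)/(m-j)$, which lies in $\bigoplus_{n\le M}W_n$ since $N(A)\in W_1$ and products of iterated integrals are again iterated integrals), but some such localization device must be supplied.

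The paper bypasses approximation entirely with an exact combinatorial identity: writing $\lambda=f(\tau_1\vee s,\tau_2\vee s,\ldots)$, it shows that on $\{N(A)\le L\}$,
\[
\sum_{n=M}^{L}\frac{(-1)^{n-M}}{(n-M)!}\int_{A^n} f(t_1,\ldots,t_M)\,1_{D_n}\,1_{O_M}\;N(\mathrm{d}t^n)\;=\;\lambda\,1_{\{N(A)=M\}},
\]
where $D_n$ enforces that all $n$ coordinates are distinct and $O_M$ that the first $M$ are ordered. The identity is just the binomial cancellation $\sum_{n=M}^{K}(-1)^{n-M}\binom{K-M}{n-M}=1_{\{K=M\}}$ with $K=N(A)$, and it delivers precisely the localization you are missing. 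After that, summing over $M$ and letting $L\to\infty$ is routine.
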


That is, every finite $\mathcal{F}_{0-}^C$-measurable variable can be approximated arbitrarily well by iterated integrals, over the past events of the processes in $C$.

Consider now the case of a point process intensity $\lambda^{k,C}_t$, and let $\eta$ link function.
Assume further that the intensity is time homogeneous: if $\eta(\lambda_t^{k,C})(\tau_1, \tau_2, \ldots)$ denotes the mechanism with which $\eta(\lambda^{k,C}_t)$ depends on the event times prior to time $t$, we say that $\eta(\lambda^{k,C}_t)$ is time-homogeneous if for $s \geq 0$, 
\begin{equation*}
    \eta(\lambda^{k,C}_t)(\tau_1, \tau_2, \ldots) = \eta(\lambda^{k,C}_{t-s})(\tau_1 -s, \tau_2-s, \ldots).
\end{equation*}

\begin{cor}\label{thm:volterra_rep}
	If $\eta(\lambda^{k,C}_t)$ is a time homogeneous point process intensity, $\eta(\lambda^{k,C}_t)$ can at all times be arbitrarily well approximated by iterated integrals in the topology of convergence in probability. 
\end{cor}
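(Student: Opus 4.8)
The plan is to deduce Corollary~\ref{thm:volterra_rep} from Theorem~\ref{thm:representation_0} by a translation-in-time argument, reducing the statement about $\eta(\lambda^{k,C}_t)$ at an arbitrary time $t$ to the already-established statement at time $0$.

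First I would fix $t \in \mathbb{R}$ and observe that $\eta(\lambda^{k,C}_t)$ is an $\mathcal{F}^C_{t-}$-measurable random variable (it is a predictable projection onto the $C$-filtration, hence $\mathcal{F}^C_{t-}$-measurable) and that it is almost surely finite, since $\lambda^{k,C}_t$ is a genuine intensity and $\eta$ maps $[0,\infty)$ into $\mathbb{R}$. The key structural input is time homogeneity: writing $\theta_s$ for the time-shift operator that sends a configuration $(\tau_1,\tau_2,\ldots)$ to $(\tau_1 - s, \tau_2 - s, \ldots)$, the hypothesis says exactly that $\eta(\lambda^{k,C}_t) = \eta(\lambda^{k,C}_0) \circ \theta_t$ in distribution, or more precisely that the two random variables are related by the measure-preserving shift induced by stationarity of $N$. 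So I would apply Theorem~\ref{thm:representation_0} to the $\mathcal{F}^C_{0-}$-measurable variable $\eta(\lambda^{k,C}_0)$: there exists a sequence $X_m \in \bigoplus_{n} W_n$, i.e.\ finite sums $X_m = \sum_{n} \varphi^0_n(h_{m,n})$ of iterated integrals over events strictly before $0$, with $X_m \to \eta(\lambda^{k,C}_0)$ in probability.

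Next I would transport this approximation forward by $t$. Applying $\theta_t^{-1}$ (equivalently, reading the same kernels $h_{m,n}$ against the shifted point configuration), each $\varphi^0_n(h_{m,n})$ becomes $\varphi^t_n(h_{m,n})$ — this is immediate from the definition of $\varphi^t_n$, which is literally $\varphi^0_n$ with the integration domain $(-\infty,0)$ replaced by $(-\infty,t)$ and $t - s$ in place of $-s$. Because $N$ is stationary, the shift $\theta_t$ is measure-preserving, so convergence in probability is preserved: $\sum_n \varphi^t_n(h_{m,n}) \to \eta(\lambda^{k,C}_t)$ in probability as $m \to \infty$. Since each $\sum_n \varphi^t_n(h_{m,n})$ is by construction a finite sum of iterated integrals over the past events of the processes in $C$ — i.e.\ an element of the analogue of $\bigoplus_n W_n$ at time $t$ — this is precisely the claimed density statement at time $t$, and $t$ was arbitrary.

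The main obstacle I anticipate is making the ``transport'' step fully rigorous rather than merely intuitive: one must be careful that stationarity of $N$ really does induce a measure-preserving transformation under which (i) the filtration $\mathcal{F}^C_{t-}$ is the image of $\mathcal{F}^C_{0-}$, (ii) the predictable projection defining $\lambda^{k,C}$ commutes with the shift (this is where the time-homogeneity hypothesis on $\eta(\lambda^{k,C}_t)$ is used, and one should check it is the right hypothesis for commuting with the projection, not merely with the raw intensity functional), and (iii) the iterated-integral functionals $\varphi^0_n$ genuinely pull back to $\varphi^t_n$ with the \emph{same} kernel. Each of these is routine given stationarity, but the bookkeeping deserves a careful sentence or two; no new analytic idea beyond Theorem~\ref{thm:representation_0} is needed.
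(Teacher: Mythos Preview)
Your proposal is correct and follows essentially the same route as the paper: apply Theorem~\ref{thm:representation_0} at $t=0$ to obtain approximating kernels, then use time homogeneity of both $\eta(\lambda^{k,C}_t)$ and the iterated-integral functionals $\varphi_n^t$ together with stationarity of $N$ to conclude that the approximation error has the same distribution at every $t$. The paper phrases the transport step as ``the difference process $\eta(\lambda^{k,C}_t) - \phi^t$ is time homogeneous, hence by stationarity its law is $t$-invariant,'' which is exactly your shift-operator argument in different language; your flagged bookkeeping concerns (i)--(iii) are precisely the points the paper leaves implicit.
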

\begin{proof}
	Take $\epsilon > 0$ and any $t \in \mathbb{R}$. Since the intensity is $\F^C_t$-predictable, $\eta(\lambda_0^{k,C}) \in \F_{0-}^C$ at time $t=0$.
	Thus take $\phi^0 \in \bigoplus_{n\in \mathbb{N}} W_n$ such that $P(|\eta(\lambda^{k,C}_0) - \phi^0| > \epsilon) < \epsilon$, which is possible by \cref{thm:representation_0}. 
	Since $\bigoplus_{n\in \mathbb{N}} W_n$ is a sum of images, we can choose $h_1 \in E_1, h_2 \in E_2, \ldots, h_L \in E_L$ such that $\phi^0 = \sum_{n=1}^L \varphi_n^{0}(h_n)$. Let $\phi$ be the process $t \mapsto \sum_{n=1}^L \varphi_n^{t}(h_n)$, and observe that $\phi$ is time homogeneous. 
	
	Conclusively, the process $\eta(\lambda^{k,C}_t) - \phi^t$ is time homogeneous, and by the assumed stationarity, the distribution of $\eta(\lambda^{k,C}_t) - \phi^t$ is invariant over $t$. In particular $P(|\eta(\lambda^{k,C}_t) - \phi^t| > \epsilon) < \epsilon$ for all $t \in \mathbb{R}$. 
\end{proof}
Observe that it is the same kernels $h_1, \ldots, h_L$ that 
enter into the approximation of $\lambda^{k,C}_t$ for all $t$. In \cref{thm:representation_0}, there is nothing special about $t=0$, and one could as well have proven that $\bigoplus_{n\in\mathbb{N}} \varphi_n^{t}(E_n)$ is dense in $\mathcal{L}(\mathcal{F}_{t-}^C)$. However, only by the time-homogeneity can one be ensured that the same kernels can be used for all $t$. 

\subsection{Approximate intensities}
The fully observed (nonlinear) Hawkes process has intensity given by sums of first-order terms 
\begin{equation*}
\eta(\lambda^k_t) = \eta(\lambda^{k,V}_t) = \beta^0 + \sum_{j\in V} \int_{-\infty}^{t-} g^{jk}(t-s) N^j(\dd s).
\end{equation*}
As discussed in \cref{subsec:marginalization}, when $C \neq V$, $\lambda^{k, C}$ cannot in general be represented by sums of first-order terms. However, by \cref{thm:volterra_rep} the intensity can be approximated by including interaction terms of higher orders, and so one could approximate $\lambda^{k,C}$ by the $\mathit{L^{\textrm{th}}}$\textit{-order} expansion
\begin{align*}
    \eta(\lambda_t^{k,C}) \approx \beta_0^k + \sum_{n=1}^L \sum_{\alpha \in C_n} \int_{(-\infty,t)} h_n^{\alpha}(t-s^n) N^{\alpha}(\dd s^n)
\end{align*}
for some sequence of kernels $h_n^\alpha$, $1\leq n \leq L$, $\alpha \in C_n$. For $L = 2$ we obtain the approximate intensity: 
\begin{align}\label{eq:second_order}
	&\eta(\lambda_t^{k, C}) \approx 
	\beta_0^k \\
	&+ \sum_{j_1 \in C}\int_{-\infty}^{t-} h^{j_1}(t-s_1) N^{j_1}(\dd s_1) \notag \\ 
    &+ \sum_{j_1, j_2 \in C}\int_{-\infty}^{t-} \int_{-\infty}^{t-} h^{j_1, j_2}(t-s_1, t-s_2) N^{j_1}(\dd s_1) N^{j_2}(\dd s_2) \notag
\end{align}
The class of models described by \eqref{eq:second_order} contains the class of linear Hawkes processes (corresponding to $h^{j_1, j_2} = 0$) but also encompasses more complicated models, such as a model where the intensity boosts only when two events occur very close to each other.

\section{Testing local independence}\label{sec:test}
We now return to the question of developing a test for local independence $j \not\rightarrow k \mid C$. We consider the approximation of $\lambda^{k,C\cup\{j\}}$ in \cref{eq:approximate_intensity}, and use the higher-order interactions from \cref{sec:volterra} together with basis splines to approximate $\lambda^{k,C}$. We fit this approximation from data and test significance of the contribution from $j$. 

\subsection{Approximating kernel functions}
We consider the question of approximating the intensities $\int_0^{t-} \overline{g}^{jk} N^j(\dd s)$ and $\lambda_t^{k, C}$ from \eqref{eq:approximate_intensity}. 

To approximate the intensity $\lambda_t^{k, C}$, we utilize the $W_0 \oplus W_1 \oplus W_2$-approximation from \cref{eq:second_order}. We approximate the kernels $h^{j_1}(s_1)$ and $h^{j_1, j_2}(s_1, s_2)$ by spline expansions 
\begin{equation*}
h^{j_1} \approx \sum_{i} \beta_i^{j_1} b_i \quad \textrm{and}\quad  h^{j_1, j_2} \approx \sum_{i_1, i_2} \beta_{i_1, i_2}^{j_1, j_2} b_{i_1} \otimes b_{i_2},
\end{equation*}
for some class of basis functions $\{b_i\}_{i}$ such as B-splines \cite{friedman2001elements}. 
Due to the linearity in $\beta$, the coefficient terms can be collected into one vector $\beta^C$ and we can write $\lambda_t^{k, C}\approx \left(\beta^C\right)^T x_t^C$. 
Each entry of $x_t^C$ corresponds to one basis function integrated with respect to either a single event type or a pair of event types. For instance the entry corresponding to $\beta_{i_1, i_2}^{j_1, j_2}$ would be 
\begin{equation*}
    \int_{-\infty}^{t-} \int_{-\infty}^{t-} b_{i_1}(t- s_1) b_{i_2}(t-s_2) N^{j_1}(\dd s_1) N^{j_2}(\dd s_2).
\end{equation*}

Similarly, we approximate the kernel $\overline{g}^{jk}$ by $\sum_i \overline{\beta}^j_i b_i$, and collect the coefficients to $\overline{\beta}^j$ and $\overline{x}_t^j$. Conclusively, the intensity \eqref{eq:approximate_intensity} can be approximated by
\begin{equation*}
    \eta\left(\overline{\lambda}_t^{k, C\cup\{j\}}\right) = \left(\beta^C\right)^T x_t^C + \left(\overline{\beta}^j\right)^T \overline{x}_t^j =: \left(\beta^{C \cup\{j\}}\right)^T x_t^{C\cup \{j\}}
\end{equation*}
for some choice of $\beta^C$ and $\overline{\beta}^j$. 
 
\subsection{Maximum likelihood}
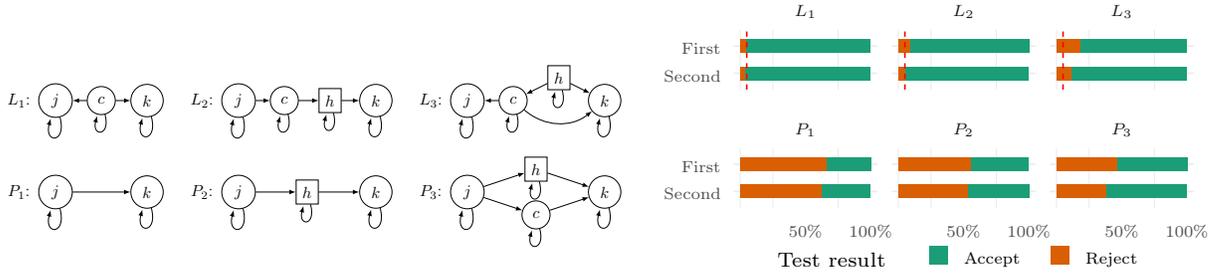
\begin{figure*}[ht]
        \subfloat{
		\centering
		\resizebox{0.55\textwidth}{!}{%
		\begin{tikzpicture}[>=latex,font=\sffamily,every node/.style={inner sep = 4pt}]
			\node[draw, circle] (A) at (4, 0) {$j$};
		    \node[draw, circle] (C) at (5,0) {$c$};
		    \node[draw, circle] (B) at (6,0) {$k$};
		    \path [->] (A) edge [loop below] (A);
		    \path [->] (B) edge [loop below] (B);
		    \path [->] (C) edge [loop below] (C);
		    \path [->] (C) edge (A);
		    \path [->] (C) edge (B);

			\node[draw, circle] (A) at (8, 0) {$j$};
		    \node[draw, circle] (C1) at (9,0) {$c$};
		    \node[draw] (C2) at (10,0) {$h$};
		    \node[draw, circle] (B) at (11,0) {$k$};
		    \path [->] (A) edge [loop below] (A);
		    \path [->] (B) edge [loop below] (B);
		    \path [->] (C1) edge [loop below] (C1);
		    \path [->] (C2) edge [loop below] (C2);
		    \path [->] (A) edge (C1);
		    \path [->] (C1) edge (C2);
		    \path [->] (C2) edge (B);
		    \node[draw, circle] (A) at (13, 0) {$j$};
		    \node[draw, circle] (C1) at (14,0) {$c$};
		    \node[draw] (C2) at (15,0.5) {$h$};
		    \node[draw, circle] (B) at (16,0) {$k$};
		    \path [->] (A) edge [loop below] (A);
		    \path [->] (B) edge [loop below] (B);
		    \path [->] (C1) edge [loop below] (C1);
		    \path [->] (C2) edge [loop below] (C2);
		    \path [->] (C1) edge (A);
		    \path [->] (C2) edge (C1);
		    \path [->] (C2) edge (B);
		    \path [->] (C1) edge[bend right=40] (B);
		    \node at (3.2, 0) {$L_1$:};
		    \node at (7.2, 0) {$L_2$:};
		    \node at (12.2, 0) {$L_3$:};

			\node[draw, circle] (A) at (4, -2) {$j$};
		    \node[draw, circle] (B) at (6, -2) {$k$};
		    \path [->] (A) edge [loop below] (A);
		    \path [->] (B) edge [loop below] (B);
		    \path [->] (A) edge (B);
		    \node[draw, circle] (A) at (8, -2) {$j$};
		    \node[draw] (C2) at (9.5,-2) {$h$};
		    \node[draw, circle] (B) at (11, -2) {$k$};
		    \path [->] (A) edge [loop below] (A);
		    \path [->] (B) edge [loop below] (B);
		    \path [->] (C2) edge [loop below] (C2);
		    \path [->] (A) edge (C2);
		    \path [->] (C2) edge (B);

			\node[draw, circle] (A) at (13, -2) {$j$};
		    \node[draw, circle] (C1) at (14.5, -2.5) {$c$};
		    \node[draw] (C2) at (14.5, -1.5) {$h$};
		    \node[draw, circle] (B) at (16, -2) {$k$};
		    \path [->] (A) edge [loop below] (A);
		    \path [->] (B) edge [loop below] (B);
		    \path [->] (C1) edge [loop below] (C1);
		    \path [->] (C2) edge [loop below] (C2);
		    \path [->] (A) edge (C1);
		    \path [->] (A) edge (C2);
		    \path [->] (C1) edge (B);
		    \path [->] (C2) edge (B);

		    \node at (3.2, -2) {$P_1$:};
		    \node at (7.2, -2) {$P_2$:};
		    \node at (12.2, -2) {$P_3$:};
		    \node at (5, -3.5){};%
		\end{tikzpicture}}%
		}%
		\hspace{0.03\linewidth}%
		\subfloat{
		    \centering
		  \scriptsize
\begin{tikzpicture}[x=1pt,y=1pt]
\definecolor{fillColor}{RGB}{255,255,255}
\begin{scope}
\definecolor{drawColor}{gray}{0.92}

\path[draw=drawColor,line width= 0.3pt,line join=round] ( 39.80, 63.67) --
	( 39.80, 86.33);

\path[draw=drawColor,line width= 0.3pt,line join=round] ( 52.15, 63.67) --
	( 52.15, 86.33);

\path[draw=drawColor,line width= 0.3pt,line join=round] ( 76.84, 63.67) --
	( 76.84, 86.33);

\path[draw=drawColor,line width= 0.6pt,line join=round] ( 37.33, 69.85) --
	( 91.66, 69.85);

\path[draw=drawColor,line width= 0.6pt,line join=round] ( 37.33, 80.15) --
	( 91.66, 80.15);
\definecolor{fillColor}{RGB}{217,95,2}

\path[fill=fillColor] ( 39.80, 77.58) rectangle ( 42.27, 82.73);

\path[fill=fillColor] ( 39.80, 67.28) rectangle ( 41.88, 72.43);
\definecolor{fillColor}{RGB}{27,158,119}

\path[fill=fillColor] ( 42.27, 77.58) rectangle ( 89.19, 82.73);

\path[fill=fillColor] ( 41.88, 67.28) rectangle ( 89.19, 72.43);
\definecolor{drawColor}{RGB}{255,0,0}

\path[draw=drawColor,line width= 0.6pt,dash pattern=on 2pt off 2pt ,line join=round] ( 42.27, 63.67) -- ( 42.27, 86.33);
\end{scope}
\begin{scope}
\definecolor{drawColor}{gray}{0.92}

\path[draw=drawColor,line width= 0.3pt,line join=round] ( 39.80, 18.94) --
	( 39.80, 41.60);

\path[draw=drawColor,line width= 0.3pt,line join=round] ( 52.15, 18.94) --
	( 52.15, 41.60);

\path[draw=drawColor,line width= 0.3pt,line join=round] ( 76.84, 18.94) --
	( 76.84, 41.60);

\path[draw=drawColor,line width= 0.6pt,line join=round] ( 37.33, 25.12) --
	( 91.66, 25.12);

\path[draw=drawColor,line width= 0.6pt,line join=round] ( 37.33, 35.42) --
	( 91.66, 35.42);
\definecolor{fillColor}{RGB}{217,95,2}

\path[fill=fillColor] ( 39.80, 32.84) rectangle ( 72.50, 37.99);

\path[fill=fillColor] ( 39.80, 22.54) rectangle ( 70.82, 27.69);
\definecolor{fillColor}{RGB}{27,158,119}

\path[fill=fillColor] ( 72.50, 32.84) rectangle ( 89.19, 37.99);

\path[fill=fillColor] ( 70.82, 22.54) rectangle ( 89.19, 27.69);
\end{scope}
\begin{scope}
\definecolor{drawColor}{gray}{0.92}

\path[draw=drawColor,line width= 0.3pt,line join=round] ( 99.63, 63.67) --
	( 99.63, 86.33);

\path[draw=drawColor,line width= 0.3pt,line join=round] (111.97, 63.67) --
	(111.97, 86.33);

\path[draw=drawColor,line width= 0.3pt,line join=round] (136.67, 63.67) --
	(136.67, 86.33);

\path[draw=drawColor,line width= 0.6pt,line join=round] ( 97.16, 69.85) --
	(151.48, 69.85);

\path[draw=drawColor,line width= 0.6pt,line join=round] ( 97.16, 80.15) --
	(151.48, 80.15);
\definecolor{fillColor}{RGB}{217,95,2}

\path[fill=fillColor] ( 99.63, 77.58) rectangle (104.07, 82.73);

\path[fill=fillColor] ( 99.63, 67.28) rectangle (102.39, 72.43);
\definecolor{fillColor}{RGB}{27,158,119}

\path[fill=fillColor] (104.07, 77.58) rectangle (149.01, 82.73);

\path[fill=fillColor] (102.39, 67.28) rectangle (149.01, 72.43);
\definecolor{drawColor}{RGB}{255,0,0}

\path[draw=drawColor,line width= 0.6pt,dash pattern=on 2pt off 2pt ,line join=round] (102.10, 63.67) -- (102.10, 86.33);
\end{scope}
\begin{scope}
\definecolor{drawColor}{gray}{0.92}

\path[draw=drawColor,line width= 0.3pt,line join=round] ( 99.63, 18.94) --
	( 99.63, 41.60);

\path[draw=drawColor,line width= 0.3pt,line join=round] (111.97, 18.94) --
	(111.97, 41.60);

\path[draw=drawColor,line width= 0.3pt,line join=round] (136.67, 18.94) --
	(136.67, 41.60);

\path[draw=drawColor,line width= 0.6pt,line join=round] ( 97.16, 25.12) --
	(151.48, 25.12);

\path[draw=drawColor,line width= 0.6pt,line join=round] ( 97.16, 35.42) --
	(151.48, 35.42);
\definecolor{fillColor}{RGB}{217,95,2}

\path[fill=fillColor] ( 99.63, 32.84) rectangle (127.28, 37.99);

\path[fill=fillColor] ( 99.63, 22.54) rectangle (126.00, 27.69);
\definecolor{fillColor}{RGB}{27,158,119}

\path[fill=fillColor] (127.28, 32.84) rectangle (149.01, 37.99);

\path[fill=fillColor] (126.00, 22.54) rectangle (149.01, 27.69);
\end{scope}
\begin{scope}
\definecolor{drawColor}{gray}{0.92}

\path[draw=drawColor,line width= 0.3pt,line join=round] (159.45, 63.67) --
	(159.45, 86.33);

\path[draw=drawColor,line width= 0.3pt,line join=round] (171.80, 63.67) --
	(171.80, 86.33);

\path[draw=drawColor,line width= 0.3pt,line join=round] (196.49, 63.67) --
	(196.49, 86.33);

\path[draw=drawColor,line width= 0.6pt,line join=round] (156.98, 69.85) --
	(211.31, 69.85);

\path[draw=drawColor,line width= 0.6pt,line join=round] (156.98, 80.15) --
	(211.31, 80.15);
\definecolor{fillColor}{RGB}{217,95,2}

\path[fill=fillColor] (159.45, 77.58) rectangle (168.74, 82.73);

\path[fill=fillColor] (159.45, 67.28) rectangle (165.28, 72.43);
\definecolor{fillColor}{RGB}{27,158,119}

\path[fill=fillColor] (168.74, 77.58) rectangle (208.84, 82.73);

\path[fill=fillColor] (165.28, 67.28) rectangle (208.84, 72.43);
\definecolor{drawColor}{RGB}{255,0,0}

\path[draw=drawColor,line width= 0.6pt,dash pattern=on 2pt off 2pt ,line join=round] (161.92, 63.67) -- (161.92, 86.33);
\end{scope}
\begin{scope}
\definecolor{drawColor}{gray}{0.92}

\path[draw=drawColor,line width= 0.3pt,line join=round] (159.45, 18.94) --
	(159.45, 41.60);

\path[draw=drawColor,line width= 0.3pt,line join=round] (171.80, 18.94) --
	(171.80, 41.60);

\path[draw=drawColor,line width= 0.3pt,line join=round] (196.49, 18.94) --
	(196.49, 41.60);

\path[draw=drawColor,line width= 0.6pt,line join=round] (156.98, 25.12) --
	(211.31, 25.12);

\path[draw=drawColor,line width= 0.6pt,line join=round] (156.98, 35.42) --
	(211.31, 35.42);
\definecolor{fillColor}{RGB}{217,95,2}

\path[fill=fillColor] (159.45, 32.84) rectangle (182.47, 37.99);

\path[fill=fillColor] (159.45, 22.54) rectangle (178.52, 27.69);
\definecolor{fillColor}{RGB}{27,158,119}

\path[fill=fillColor] (182.47, 32.84) rectangle (208.84, 37.99);

\path[fill=fillColor] (178.52, 22.54) rectangle (208.84, 27.69);
\end{scope}
\begin{scope}
\definecolor{drawColor}{gray}{0.10}

\node[text=drawColor,anchor=base,inner sep=0pt, outer sep=0pt, scale=  0.88] at ( 64.50, 46.85) {$P_1$};
\end{scope}
\begin{scope}
\definecolor{drawColor}{gray}{0.10}

\node[text=drawColor,anchor=base,inner sep=0pt, outer sep=0pt, scale=  0.88] at (124.32, 46.85) {$P_2$};
\end{scope}
\begin{scope}
\definecolor{drawColor}{gray}{0.10}

\node[text=drawColor,anchor=base,inner sep=0pt, outer sep=0pt, scale=  0.88] at (184.15, 46.85) {$P_3$};
\end{scope}
\begin{scope}
\definecolor{drawColor}{gray}{0.10}

\node[text=drawColor,anchor=base,inner sep=0pt, outer sep=0pt, scale=  0.88] at ( 64.50, 91.59) {$L_1$};
\end{scope}
\begin{scope}
\definecolor{drawColor}{gray}{0.10}

\node[text=drawColor,anchor=base,inner sep=0pt, outer sep=0pt, scale=  0.88] at (124.32, 91.59) {$L_2$};
\end{scope}
\begin{scope}
\definecolor{drawColor}{gray}{0.10}

\node[text=drawColor,anchor=base,inner sep=0pt, outer sep=0pt, scale=  0.88] at (184.15, 91.59) {$L_3$};
\end{scope}
\begin{scope}
\definecolor{drawColor}{gray}{0.30}

\node[text=drawColor,anchor=base,inner sep=0pt, outer sep=0pt, scale=  0.88] at ( 64.50,  7.93) {50\%};

\node[text=drawColor,anchor=base,inner sep=0pt, outer sep=0pt, scale=  0.88] at ( 89.19,  7.93) {100\%};
\end{scope}
\begin{scope}
\definecolor{drawColor}{gray}{0.30}

\node[text=drawColor,anchor=base,inner sep=0pt, outer sep=0pt, scale=  0.88] at (124.32,  7.93) {50\%};

\node[text=drawColor,anchor=base,inner sep=0pt, outer sep=0pt, scale=  0.88] at (149.01,  7.93) {100\%};
\end{scope}
\begin{scope}
\definecolor{drawColor}{gray}{0.30}

\node[text=drawColor,anchor=base,inner sep=0pt, outer sep=0pt, scale=  0.88] at (184.15,  7.93) {50\%};

\node[text=drawColor,anchor=base,inner sep=0pt, outer sep=0pt, scale=  0.88] at (208.84,  7.93) {100\%};
\end{scope}
\begin{scope}
\definecolor{drawColor}{gray}{0.30}

\node[text=drawColor,anchor=base east,inner sep=0pt, outer sep=0pt, scale=  0.88] at ( 32.38, 66.82) {Second};

\node[text=drawColor,anchor=base east,inner sep=0pt, outer sep=0pt, scale=  0.88] at ( 32.38, 77.12) {First};
\end{scope}
\begin{scope}
\definecolor{drawColor}{gray}{0.30}

\node[text=drawColor,anchor=base east,inner sep=0pt, outer sep=0pt, scale=  0.88] at ( 32.38, 22.09) {Second};

\node[text=drawColor,anchor=base east,inner sep=0pt, outer sep=0pt, scale=  0.88] at ( 32.38, 32.39) {First};
\end{scope}
\begin{scope}
\definecolor{drawColor}{RGB}{0,0,0}

\node[text=drawColor,anchor=base west,inner sep=0pt, outer sep=0pt, scale=  1.10] at ( 53.85, -3.43) {Test result};
\end{scope}
\begin{scope}
\definecolor{fillColor}{RGB}{27,158,119}

\path[fill=fillColor] (111.23, -2.61) rectangle (118.34,  4.50);
\end{scope}
\begin{scope}
\definecolor{fillColor}{RGB}{217,95,2}

\path[fill=fillColor] (157.16, -2.61) rectangle (164.27,  4.50);
\end{scope}
\begin{scope}
\definecolor{drawColor}{RGB}{0,0,0}

\node[text=drawColor,anchor=base west,inner sep=0pt, outer sep=0pt, scale=  0.88] at (124.55, -2.08) {Accept};
\end{scope}
\begin{scope}
\definecolor{drawColor}{RGB}{0,0,0}

\node[text=drawColor,anchor=base west,inner sep=0pt, outer sep=0pt, scale=  0.88] at (170.48, -2.08) {Reject};
\end{scope}
\end{tikzpicture}
 		}
		\caption{(\textit{left}) Graphical structures used for testing local independence. Square nodes indicate unobserved event types. For each, we simulate $500$ samples from a Hawkes process with this true local independence graph, and evaluate the test $j \not\rightarrow k \mid C$ for with $C$ being $\{c,k\}$ or (in the absence of a node $c$) $\{k\}$.
		(\textit{right}) $H_0$ acceptance rates ($p < 0.05$ level) for the 500 repetitions of the test $j \rightarrow k \mid \{k, c\}$ in each of the structures using both a first- (1) and second- (2) order approximation of $\lambda^{k, C}$. The colors indicate the proportion of tests accepted and rejected, and the dashed line marks $5\%$ rejection rate (only relevant for graphs $L_1$-$L_3$).}
		\label{fig:level_power}
\end{figure*}
Given an observation of a point process over the interval $[0, T]$, we compute maximum likelihood estimates $\hat\beta^{C \cup \{j\}}$ using the penalized log-likelihood
\begin{equation*}
    \int_0^T  \log \overline{\lambda}_t^{k, C\cup \{j\}} N^k(\dd t) - \int_0^T \overline{\lambda}_t^{k, C\cup \{j\}} \dd t - \rho(\beta^{C\cup\{j\}}),
\end{equation*}
where $\rho(\beta) = \kappa_0 \beta^T \Omega \beta$ is a quadratic penalization, and where $\kappa_0 > 0$ and $\Omega$ is the roughness penalty matrix, which penalizes curvature of the kernel estimates (see Chapter 5 in~\cite{friedman2001elements}).

Assuming that the true model belongs to the model class, with parameter $\beta_0^{C\cup \{j\}}$, it follows from \cite{Hansen:2015a} that the distribution of the maximum likelihood estimate $\hat\beta^{C\cup \{j\}}$ is approximately normal with mean 
$$\mu = (I + 2\kappa_0 \hat{J}_T^{-1}\Omega)\beta_0^{C\cup \{j\}}$$
and covariance matrix 
$$\Sigma = \hat J_T^{-1} \hat K_T \hat J_T^{-1}$$
where 
\begin{align*}
\hat K_T &= \int_0^T x_t^{C\cup \{j\}} {x_t^{C\cup \{j\}}}^T \frac{\left((\eta^{-1})'(\hat \beta x_t^{C\cup \{j\}})\right)^2}{\eta^{-1}(\hat \beta x_t^{C\cup \{j\}})} \dd t \\
\hat J_T &= \hat K_T - 2 \kappa_0 \Omega.
\end{align*}

If $\mu_j, \Sigma_j$ denotes the respective subvector and -matrix which corresponds to the entries of $\overline\beta^j$, the approximate distribution of the estimated parameter $\hat{\overline\beta}^j$ is known and can be used for testing. 

\subsection{Hypothesis testing}
We can now test the hypothesis $H_0: \overline g^{jk} = 0$ by testing whether $\hat{\overline\beta^j} = 0$. 
In the setting of testing $g = 0$ for a function $g = \sum_i \beta_i b_i$, \cite{wood2012p} show that directly testing $\hat \beta^j = 0$ can lead to loss of power. Instead,  \cite{wood2012p} proposes to evaluate the function in a grid $\mathbb{X} = (x_1, \ldots, x_M)$ and perform the hypothesis test that the resulting vector $g(\mathbb{X}):= (g(x_m))_{1 \leq m \leq M}$ is $0$. 

Let $\mathbb{B} = (b_i(x_m))_{m, i}$ be the matrix where the $i$-th column is the evaluation of the $i$-th basis function evaluated in $\mathbb{X}$. Then $\overline g^{jk}(\mathbb{X}) = \mathbb{B}\hat{\overline\beta}^{jk}$ is the evaluation of $\overline g^{jk}$ in $\mathbb{X}$, which is then approximately $\mathcal{N}\left(\mathbb{B}\mu_j, \mathbb{B} \Sigma_j \mathbb{B}^T \right)$-distributed. 
This allows for testing the hypothesis $\overline g^{jk} = 0$ by the Wald-test statistic: 
$$T = \left[\mathbb{B}\hat{\overline\beta}^{jk}\right]^T \left(\mathbb{B} \Sigma_j \mathbb{B}^T \right)^{-1} \left[\mathbb{B}\hat{\overline\beta}^{jk}\right]$$ which is approximately $\chi^2_{(M)}$-distributed. By comparing $T$ to the theoretical quantiles of $\chi^2_{(M)}$, we can test for significance of the contribution of $j$ to the intensity $\lambda^{k, C\cup\{j\}}$. 
The test is implemented in \verb|python| and is available online.\footnote{Code available at \url{https://github.com/nikolajthams/LIPP}.}

\section{Simulation experiments}\label{sec:simulation}
We evaluate our test using simulated data. First we explore the level and power for several graphical structures. Second we apply the test in a causal discovery algorithm to learn the local independence graph from an observed data set. 
In both experiments, we compare our method to the first-order method in \cref{eq:Hawkes_filt_approx}, where also the $\lambda^{k, C}$ intensity is approximated by basis expansions using only first-order interaction terms. 

\subsection{Level and power}\label{subsec:level-power}
In \cref{subsec:marginalization} we argued that the misspecification from using only first-order terms may lead to a loss of level. 
To validate this, for each of the graphs $\mathcal{G}$ in \cref{fig:level_power}, we sample $n=500$ point processes from the Hawkes process with kernel $g^{i_1i_2}(s) = \alpha_{i_1i_2}\beta_{i_1i_2}e^{-\beta_{i_1i_2} s}$ if $(i_1, i_2) \in \mathcal{G}$ and otherwise $g^{i_1i_2}(s)=0$.
Simulation details are in \cref{appendix:sim-details}.

For each sample, we test the hypothesis \mbox{$H_0: j\not\rightarrow k \mid C$} with $C = \{c, k\}$ (or $C = \{k\}$ in the graphs with no node $c$). The hypothesis $H_0$ is true in structures $L_1$--$L_3$ (and thus we here evaluate level) and false in structures $P_1$--$P_3$ (and so we here evaluate power).

The nodes $h$ represent an unobserved event type, and so is not included in the conditioning set $C$. Due to the latent events, we expect the first-order test to loose level compared to the second-order test.
We conduct the test of $H_0$ from \cref{sec:test} on a nominal $5\%$ level and display in \cref{fig:level_power} the proportion of $p$-values below $5\%$ for each structure, with red indicating a rejected test of $H_0$. 

In the structure $L_1$, we observe that the both the first- and second order tests maintain level in the structure $L_1$. This is as expected, because the ground truth structure $L_1$ has no 
latent events, and so the effect $c \rightarrow k$ is truly a first-order interaction. 
In the structure $L_2$, our proposed second order test has a rejection rate around $5\%$, while the first-order test exceeds the nominal level by rejecting in around $9\%$ of the simulations. This indicates that due to the latent process $N^h$ being marginalized out, the dependence between $N^c$ and $N^k$ is not fully captured by first-order interactions, and so when fitting only first-order interactions, there is some residual information which mistakenly is then captured in the fitted kernel $\overline{g}^{jk}$. 
By introducing second-order interactions, this residual information is reduced, and the false negative link $j\rightarrow k$ becomes less likely. 
In $L_3$ both the first- and second-order tests reject in more than $5\%$ of cases, however with the level of the second-order test being closer to the nominal $5\%$ level. This indicates that the marginalization of $h$ induces a model misspecification which is partly captured by the second-order interaction. 

For the graphs $P_1, P_2$ and $P_3$, where truly $j \rightarrow k \mid C$, we observe that both the first- and second-order approaches have substantial power. For the structure $P_3$, we observe that the first-order test has more power than the second-order test, possibly due to the fewer parameters that need to be estimated to use the first-order test.

\subsection{Causal Structure Learning}\label{subsec:sim-eca}
We also evaluate the proposed test in the context of the Causal Analysis (CA) algorithm proposed by Meek \cite{Meek:2014}, which is similar to the PC-algorithm \cite{Spirtes:1993} but applies to local independence graphs (see \cref{fig:illustration_pcalg} for an illustration of the algorithm). For $d \in \{3, \ldots, 7\}$, we simulate $n=60$ graphs of dimension $d$ and with each edge occurring with a fixed probability of $0.2$. We then simulate a Hawkes process with the simulated graph as causal graph. Simulation details are in \cref{subsec:details-eca}.
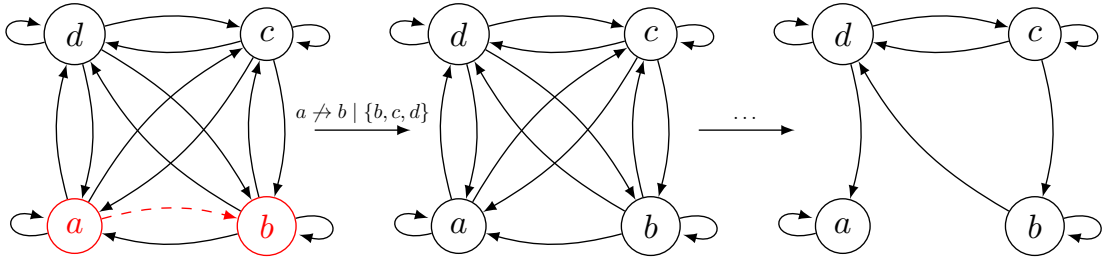
\begin{figure}[t]
		\centering
		\resizebox{\linewidth}{!}{%
		\begin{tikzpicture}[>=latex,font=\sffamily]
			\node[draw, circle,red] (A) at (0, 0) {$a$};
		    \node[draw, circle, red] (B) at (2, 0) {$b$};
		    \node[draw, circle] (C) at (2, 2) {$c$};
		    \node[draw, circle] (D) at (0, 2) {$d$};
		    \path [->, dashed, red] (A) edge [bend left = 15] (B);
		    \path [->] (B) edge [bend left = 15] (A);
		    \path [->] (A) edge [bend left = 15] (C);
		    \path [->] (C) edge [bend left = 15] (A);
		    \path [->] (A) edge [bend left = 15] (D);
		    \path [->] (D) edge [bend left = 15] (A);
		    \path [->] (B) edge [bend left = 15] (C);
		    \path [->] (C) edge [bend left = 15] (B);
		    \path [->] (B) edge [bend left = 15] (D);
		    \path [->] (D) edge [bend left = 15] (B);
		    \path [->] (C) edge [bend left = 15] (D);
		    \path [->] (D) edge [bend left = 15] (C);
		    \path [->] (A) edge [loop left] (A);
		    \path [->] (B) edge [loop right] (B);
		    \path [->] (C) edge [loop right] (C);
		    \path [->] (D) edge [loop left] (D);
		    \path [->] (2.5, 1) edge node[above, midway,scale=0.6] {$a \not\to b \mid \{b, c, d\}$} (3.5, 1);
			\node[draw, circle] (A) at (4, 0) {$a$};
		    \node[draw, circle] (B) at (6, 0) {$b$};
		    \node[draw, circle] (C) at (6, 2) {$c$};
		    \node[draw, circle] (D) at (4, 2) {$d$};
		    \path [->] (B) edge [bend left = 15] (A);
		    \path [->] (A) edge [bend left = 15] (C);
		    \path [->] (C) edge [bend left = 15] (A);
		    \path [->] (A) edge [bend left = 15] (D);
		    \path [->] (D) edge [bend left = 15] (A);
		    \path [->] (B) edge [bend left = 15] (C);
		    \path [->] (C) edge [bend left = 15] (B);
		    \path [->] (B) edge [bend left = 15] (D);
		    \path [->] (D) edge [bend left = 15] (B);
		    \path [->] (C) edge [bend left = 15] (D);
		    \path [->] (D) edge [bend left = 15] (C);
		    \path [->] (A) edge [loop left] (A);
		    \path [->] (B) edge [loop right] (B);
		    \path [->] (C) edge [loop right] (C);
		    \path [->] (D) edge [loop left] (D);
		    \path [->] (6.5, 1) edge node[above,midway,scale=0.6] {$\cdots$} (7.5, 1);
			\node[draw, circle] (A) at (8, 0) {$a$};
		    \node[draw, circle] (B) at (10, 0) {$b$};
		    \node[draw, circle] (C) at (10, 2) {$c$};
		    \node[draw, circle] (D) at (8, 2) {$d$};
		    \path [->] (D) edge [bend left = 15] (A);
		    \path [->] (C) edge [bend left = 15] (B);
		    \path [->] (B) edge [bend left = 15] (D);
		    \path [->] (C) edge [bend left = 15] (D);
		    \path [->] (D) edge [bend left = 15] (C);
		    \path [->] (A) edge [loop left] (A);
		    \path [->] (B) edge [loop right] (B);
		    \path [->] (C) edge [loop right] (C);
		    \path [->] (D) edge [loop left] (D);
		\end{tikzpicture} %
}
		\label{fig:illustration_pcalg}
		\caption{Illustration of constrained based learning algorithms like the Causal Analysis algorithm \cite{Meek:2014} or the PC-algorithm \cite{Spirtes:1993}. The algorithm starts with the fully connected graph (\textit{left}), and removes the edge $a \rightarrow b$ if there exist a set $C$ of current parents of $b$, such that $a \not\rightarrow b \mid C$ (\textit{middle}). This is then done repeatedly for all nodes and for sets $C$ of increasing size. The algorithm terminates, when no more edges can be removed, that is when no more local independences can be found (\textit{right}).}
\end{figure}
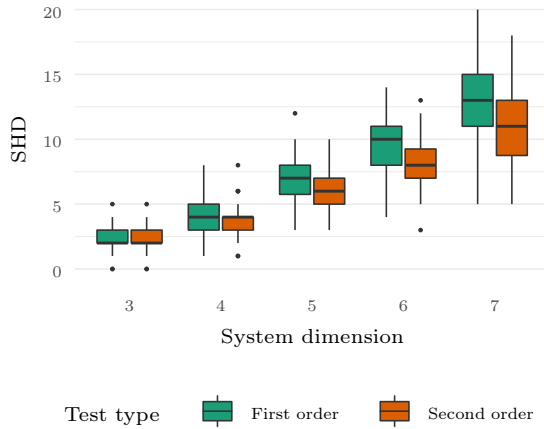
\begin{figure}[hbt]
  \center
  \scriptsize
\begin{tikzpicture}[x=1pt,y=1pt]
\definecolor{fillColor}{RGB}{255,255,255}
\begin{scope}
\definecolor{drawColor}{gray}{0.92}

\path[draw=drawColor,line width= 0.3pt,line join=round] ( 31.71, 84.33) --
	(211.31, 84.33);

\path[draw=drawColor,line width= 0.3pt,line join=round] ( 31.71,108.88) --
	(211.31,108.88);

\path[draw=drawColor,line width= 0.3pt,line join=round] ( 31.71,133.43) --
	(211.31,133.43);

\path[draw=drawColor,line width= 0.3pt,line join=round] ( 31.71,157.99) --
	(211.31,157.99);

\path[draw=drawColor,line width= 0.6pt,line join=round] ( 31.71, 72.05) --
	(211.31, 72.05);

\path[draw=drawColor,line width= 0.6pt,line join=round] ( 31.71, 96.60) --
	(211.31, 96.60);

\path[draw=drawColor,line width= 0.6pt,line join=round] ( 31.71,121.16) --
	(211.31,121.16);

\path[draw=drawColor,line width= 0.6pt,line join=round] ( 31.71,145.71) --
	(211.31,145.71);

\path[draw=drawColor,line width= 0.6pt,line join=round] ( 31.71,170.26) --
	(211.31,170.26);
\definecolor{drawColor}{gray}{0.20}
\definecolor{fillColor}{gray}{0.20}

\path[draw=drawColor,line width= 0.4pt,line join=round,line cap=round,fill=fillColor] ( 45.96, 72.05) circle (  0.68);

\path[draw=drawColor,line width= 0.4pt,line join=round,line cap=round,fill=fillColor] ( 45.96, 72.05) circle (  0.68);

\path[draw=drawColor,line width= 0.4pt,line join=round,line cap=round,fill=fillColor] ( 45.96, 96.60) circle (  0.68);

\path[draw=drawColor,line width= 0.6pt,line join=round] ( 45.96, 86.78) -- ( 45.96, 91.69);

\path[draw=drawColor,line width= 0.6pt,line join=round] ( 45.96, 81.87) -- ( 45.96, 76.96);
\definecolor{fillColor}{RGB}{27,158,119}

\path[draw=drawColor,line width= 0.6pt,line join=round,line cap=round,fill=fillColor] ( 40.13, 86.78) --
	( 40.13, 81.87) --
	( 51.79, 81.87) --
	( 51.79, 86.78) --
	( 40.13, 86.78) --
	cycle;

\path[draw=drawColor,line width= 1.1pt,line join=round] ( 40.13, 81.87) -- ( 51.79, 81.87);
\definecolor{fillColor}{gray}{0.20}

\path[draw=drawColor,line width= 0.4pt,line join=round,line cap=round,fill=fillColor] ( 58.91, 72.05) circle (  0.68);

\path[draw=drawColor,line width= 0.4pt,line join=round,line cap=round,fill=fillColor] ( 58.91, 72.05) circle (  0.68);

\path[draw=drawColor,line width= 0.4pt,line join=round,line cap=round,fill=fillColor] ( 58.91, 96.60) circle (  0.68);

\path[draw=drawColor,line width= 0.6pt,line join=round] ( 58.91, 86.78) -- ( 58.91, 91.69);

\path[draw=drawColor,line width= 0.6pt,line join=round] ( 58.91, 81.87) -- ( 58.91, 76.96);
\definecolor{fillColor}{RGB}{217,95,2}

\path[draw=drawColor,line width= 0.6pt,line join=round,line cap=round,fill=fillColor] ( 53.08, 86.78) --
	( 53.08, 81.87) --
	( 64.74, 81.87) --
	( 64.74, 86.78) --
	( 53.08, 86.78) --
	cycle;

\path[draw=drawColor,line width= 1.1pt,line join=round] ( 53.08, 81.87) -- ( 64.74, 81.87);

\path[draw=drawColor,line width= 0.6pt,line join=round] ( 80.50, 96.60) -- ( 80.50,111.34);

\path[draw=drawColor,line width= 0.6pt,line join=round] ( 80.50, 86.78) -- ( 80.50, 76.96);
\definecolor{fillColor}{RGB}{27,158,119}

\path[draw=drawColor,line width= 0.6pt,line join=round,line cap=round,fill=fillColor] ( 74.67, 96.60) --
	( 74.67, 86.78) --
	( 86.33, 86.78) --
	( 86.33, 96.60) --
	( 74.67, 96.60) --
	cycle;

\path[draw=drawColor,line width= 1.1pt,line join=round] ( 74.67, 91.69) -- ( 86.33, 91.69);
\definecolor{fillColor}{gray}{0.20}

\path[draw=drawColor,line width= 0.4pt,line join=round,line cap=round,fill=fillColor] ( 93.45,111.34) circle (  0.68);

\path[draw=drawColor,line width= 0.4pt,line join=round,line cap=round,fill=fillColor] ( 93.45, 76.96) circle (  0.68);

\path[draw=drawColor,line width= 0.4pt,line join=round,line cap=round,fill=fillColor] ( 93.45, 76.96) circle (  0.68);

\path[draw=drawColor,line width= 0.4pt,line join=round,line cap=round,fill=fillColor] ( 93.45,101.51) circle (  0.68);

\path[draw=drawColor,line width= 0.4pt,line join=round,line cap=round,fill=fillColor] ( 93.45,101.51) circle (  0.68);

\path[draw=drawColor,line width= 0.4pt,line join=round,line cap=round,fill=fillColor] ( 93.45,101.51) circle (  0.68);

\path[draw=drawColor,line width= 0.6pt,line join=round] ( 93.45, 91.69) -- ( 93.45, 96.60);

\path[draw=drawColor,line width= 0.6pt,line join=round] ( 93.45, 86.78) -- ( 93.45, 81.87);
\definecolor{fillColor}{RGB}{217,95,2}

\path[draw=drawColor,line width= 0.6pt,line join=round,line cap=round,fill=fillColor] ( 87.62, 91.69) --
	( 87.62, 86.78) --
	( 99.28, 86.78) --
	( 99.28, 91.69) --
	( 87.62, 91.69) --
	cycle;

\path[draw=drawColor,line width= 1.1pt,line join=round] ( 87.62, 91.69) -- ( 99.28, 91.69);
\definecolor{fillColor}{gray}{0.20}

\path[draw=drawColor,line width= 0.4pt,line join=round,line cap=round,fill=fillColor] (115.04,130.98) circle (  0.68);

\path[draw=drawColor,line width= 0.6pt,line join=round] (115.04,111.34) -- (115.04,121.16);

\path[draw=drawColor,line width= 0.6pt,line join=round] (115.04,100.29) -- (115.04, 86.78);
\definecolor{fillColor}{RGB}{27,158,119}

\path[draw=drawColor,line width= 0.6pt,line join=round,line cap=round,fill=fillColor] (109.21,111.34) --
	(109.21,100.29) --
	(120.86,100.29) --
	(120.86,111.34) --
	(109.21,111.34) --
	cycle;

\path[draw=drawColor,line width= 1.1pt,line join=round] (109.21,106.43) -- (120.86,106.43);

\path[draw=drawColor,line width= 0.6pt,line join=round] (127.99,106.43) -- (127.99,121.16);

\path[draw=drawColor,line width= 0.6pt,line join=round] (127.99, 96.60) -- (127.99, 86.78);
\definecolor{fillColor}{RGB}{217,95,2}

\path[draw=drawColor,line width= 0.6pt,line join=round,line cap=round,fill=fillColor] (122.16,106.43) --
	(122.16, 96.60) --
	(133.82, 96.60) --
	(133.82,106.43) --
	(122.16,106.43) --
	cycle;

\path[draw=drawColor,line width= 1.1pt,line join=round] (122.16,101.51) -- (133.82,101.51);

\path[draw=drawColor,line width= 0.6pt,line join=round] (149.57,126.07) -- (149.57,140.80);

\path[draw=drawColor,line width= 0.6pt,line join=round] (149.57,111.34) -- (149.57, 91.69);
\definecolor{fillColor}{RGB}{27,158,119}

\path[draw=drawColor,line width= 0.6pt,line join=round,line cap=round,fill=fillColor] (143.74,126.07) --
	(143.74,111.34) --
	(155.40,111.34) --
	(155.40,126.07) --
	(143.74,126.07) --
	cycle;

\path[draw=drawColor,line width= 1.1pt,line join=round] (143.74,121.16) -- (155.40,121.16);
\definecolor{fillColor}{gray}{0.20}

\path[draw=drawColor,line width= 0.4pt,line join=round,line cap=round,fill=fillColor] (162.53, 86.78) circle (  0.68);

\path[draw=drawColor,line width= 0.4pt,line join=round,line cap=round,fill=fillColor] (162.53,135.89) circle (  0.68);

\path[draw=drawColor,line width= 0.6pt,line join=round] (162.53,117.47) -- (162.53,130.98);

\path[draw=drawColor,line width= 0.6pt,line join=round] (162.53,106.43) -- (162.53, 96.60);
\definecolor{fillColor}{RGB}{217,95,2}

\path[draw=drawColor,line width= 0.6pt,line join=round,line cap=round,fill=fillColor] (156.70,117.47) --
	(156.70,106.43) --
	(168.35,106.43) --
	(168.35,117.47) --
	(156.70,117.47) --
	cycle;

\path[draw=drawColor,line width= 1.1pt,line join=round] (156.70,111.34) -- (168.35,111.34);

\path[draw=drawColor,line width= 0.6pt,line join=round] (184.11,145.71) -- (184.11,170.26);

\path[draw=drawColor,line width= 0.6pt,line join=round] (184.11,126.07) -- (184.11, 96.60);
\definecolor{fillColor}{RGB}{27,158,119}

\path[draw=drawColor,line width= 0.6pt,line join=round,line cap=round,fill=fillColor] (178.28,145.71) --
	(178.28,126.07) --
	(189.94,126.07) --
	(189.94,145.71) --
	(178.28,145.71) --
	cycle;

\path[draw=drawColor,line width= 1.1pt,line join=round] (178.28,135.89) -- (189.94,135.89);

\path[draw=drawColor,line width= 0.6pt,line join=round] (197.06,135.89) -- (197.06,160.44);

\path[draw=drawColor,line width= 0.6pt,line join=round] (197.06,115.02) -- (197.06, 96.60);
\definecolor{fillColor}{RGB}{217,95,2}

\path[draw=drawColor,line width= 0.6pt,line join=round,line cap=round,fill=fillColor] (191.23,135.89) --
	(191.23,115.02) --
	(202.89,115.02) --
	(202.89,135.89) --
	(191.23,135.89) --
	cycle;

\path[draw=drawColor,line width= 1.1pt,line join=round] (191.23,126.07) -- (202.89,126.07);
\end{scope}
\begin{scope}
\definecolor{drawColor}{gray}{0.30}

\node[text=drawColor,anchor=base east,inner sep=0pt, outer sep=0pt, scale=  0.88] at ( 26.76, 69.02) {0};

\node[text=drawColor,anchor=base east,inner sep=0pt, outer sep=0pt, scale=  0.88] at ( 26.76, 93.57) {5};

\node[text=drawColor,anchor=base east,inner sep=0pt, outer sep=0pt, scale=  0.88] at ( 26.76,118.13) {10};

\node[text=drawColor,anchor=base east,inner sep=0pt, outer sep=0pt, scale=  0.88] at ( 26.76,142.68) {15};

\node[text=drawColor,anchor=base east,inner sep=0pt, outer sep=0pt, scale=  0.88] at ( 26.76,167.23) {20};
\end{scope}
\begin{scope}
\definecolor{drawColor}{gray}{0.30}

\node[text=drawColor,anchor=base,inner sep=0pt, outer sep=0pt, scale=  0.88] at ( 52.44, 56.13) {3};

\node[text=drawColor,anchor=base,inner sep=0pt, outer sep=0pt, scale=  0.88] at ( 86.97, 56.13) {4};

\node[text=drawColor,anchor=base,inner sep=0pt, outer sep=0pt, scale=  0.88] at (121.51, 56.13) {5};

\node[text=drawColor,anchor=base,inner sep=0pt, outer sep=0pt, scale=  0.88] at (156.05, 56.13) {6};

\node[text=drawColor,anchor=base,inner sep=0pt, outer sep=0pt, scale=  0.88] at (190.59, 56.13) {7};
\end{scope}
\begin{scope}
\definecolor{drawColor}{RGB}{0,0,0}

\node[text=drawColor,anchor=base,inner sep=0pt, outer sep=0pt, scale=  1.10] at (121.51, 44.09) {System dimension};
\end{scope}
\begin{scope}
\definecolor{drawColor}{RGB}{0,0,0}

\node[text=drawColor,rotate= 90.00,anchor=base,inner sep=0pt, outer sep=0pt, scale=  1.10] at ( 13.08,121.16) {SHD};
\end{scope}
\begin{scope}
\definecolor{drawColor}{RGB}{0,0,0}

\node[text=drawColor,anchor=base west,inner sep=0pt, outer sep=0pt, scale=  1.10] at ( 27.70, 14.44) {Test type};
\end{scope}
\begin{scope}
\definecolor{drawColor}{gray}{0.20}

\path[draw=drawColor,line width= 0.6pt,line join=round,line cap=round] ( 85.70, 12.45) --
	( 85.70, 14.61);

\path[draw=drawColor,line width= 0.6pt,line join=round,line cap=round] ( 85.70, 21.84) --
	( 85.70, 24.01);
\definecolor{fillColor}{RGB}{27,158,119}

\path[draw=drawColor,line width= 0.6pt,line join=round,line cap=round,fill=fillColor] ( 80.28, 14.61) rectangle ( 91.12, 21.84);

\path[draw=drawColor,line width= 0.6pt,line join=round,line cap=round] ( 80.28, 18.23) --
	( 91.12, 18.23);
\end{scope}
\begin{scope}
\definecolor{drawColor}{gray}{0.20}

\path[draw=drawColor,line width= 0.6pt,line join=round,line cap=round] (152.70, 12.45) --
	(152.70, 14.61);

\path[draw=drawColor,line width= 0.6pt,line join=round,line cap=round] (152.70, 21.84) --
	(152.70, 24.01);
\definecolor{fillColor}{RGB}{217,95,2}

\path[draw=drawColor,line width= 0.6pt,line join=round,line cap=round,fill=fillColor] (147.28, 14.61) rectangle (158.12, 21.84);

\path[draw=drawColor,line width= 0.6pt,line join=round,line cap=round] (147.28, 18.23) --
	(158.12, 18.23);
\end{scope}
\begin{scope}
\definecolor{drawColor}{RGB}{0,0,0}

\node[text=drawColor,anchor=base west,inner sep=0pt, outer sep=0pt, scale=  0.88] at ( 98.42, 15.20) {First order};
\end{scope}
\begin{scope}
\definecolor{drawColor}{RGB}{0,0,0}

\node[text=drawColor,anchor=base west,inner sep=0pt, outer sep=0pt, scale=  0.88] at (165.42, 15.20) {Second order};
\end{scope}
\end{tikzpicture}
   \caption{Structural Hamming Distances (SHD) between the true graph that simulated data and the graphs estimated by using either first or second order tests in the experiment in \cref{subsec:sim-eca}.}
  \label{fig:eca}
\end{figure}
Constrained based causal learning algorithms, such as the CA-algorithm, estimate the causal graph by sequentially testing local independence $j\not\rightarrow k\mid C$ for nodes $j,k$ given conditioning sets $C\subset V\setminus\{j\}$ of increasing size. 
If at some point, a local independence $j\not\rightarrow k\mid C$ is found, the edge $j\rightarrow k$ is removed from the graph. 

For each simulated Hawkes process, we run the CA-algorithm using either the first- or the second-order tests and obtain a resulting estimated graph. We then compare the estimated graphs to the true graph that generated the Hawkes process by the Structural Hamming Distance (SHD), which measures the number of edge additions, removals or flips that is needed to convert the estimated graph into the true graph. That is, the SHD measures how far the estimated graph is from the true graph. 
\cref{fig:eca} shows the resulting Structural Hamming Distances for the different dimensions. We observe that for all dimensions, the second-order approach performs as well or better than the first-order approach. Notably, this is more outspoken as dimensions increase: In larger systems, more processes are marginalized away when testing $j\not\rightarrow k \mid C$, and so the effect of model misspecification is more severe for larger dimensions. 

\section{Neuron firing data}\label{sec:real-world}
We employ a causal discovery algorithm using our proposed tests to a data set of neuron firing in turtles.\footnote{Data provided by Associate Professor Rune W. Berg, University of Copenhagen.} The turtles were exposed to a stimuli in a period of $10$ seconds, in which the activity of $d=6$ channels were measured. The experiment was repeated $5$ times. 

For each repetition, we employ the Causal Analysis (CA) algorithm from \cite{Meek:2014} to learn the causal structure, using either first- or second-order tests. \cref{fig:neuron_firing} shows data from the first repetition of the experiment and the resulting learned graphs (repetitions 2--5 are shown in \cref{fig:rep-2,fig:rep-3,fig:rep-4,fig:rep-5} in the appendix). The graph estimated using second-order tests is sparser than the one using first-order tests. This concurs with our motivation for including second-order terms: when level is lost due to misspecification, the edge $j \rightarrow k$ will too often remain in the graph, even though $j \not\rightarrow k \mid C$ for some $C$. Using first order tests results in a denser and less informative graph.
This effect is more outspoken in the neuron firing data than in the simulated data in \cref{sec:simulation}: While the synthetic data was truly simulated from a Hawkes process, and so the misspecification would only be due to marginalization, there may be additional misspecification in the real data if the full process is not truly a Hawkes process. 

Since ground truth graphs for the neural connections are not available, we cannot directly evaluate which test provides estimated graphs closer to ground truth. Instead, we compare the first- and second-order tests by their consistency across the $5$ repetitions, i.e. how similar the estimated graphs are from the $5$ repetitions. 
For each repetition, a separate graph is learned using the CA-algorithm, with a test using either first- or second-order terms. 
In \cref{tab:consistency} we display the proportion of edges where either \textit{i)} all $5$ graphs agree on the presence or absence of the edges and \textit{ii)} at least $4$ of $5$ graphs agree. 
As a baseline, we include the theoretical proportions, if in each graph, an edge would appear randomly with a probability of $1/2$. Self-edges, which are easy to detect, and hence inflates consistency, are excluded from all numbers.
We observe that the second-order approach is more consistent in terms of both agreement between all $5$ repetitions and agreement between at least $4$ repetitions. 
\begin{figure}[t]
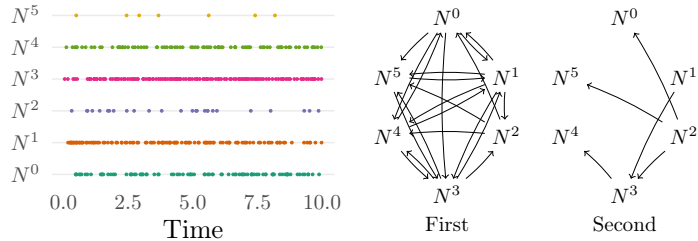

\centering
    \subfloat{
    \resizebox{0.3\linewidth}{!}{%
}
    }
    \caption{(\textit{left}) The first repetition of the experiment. Each point corresponds to one neuron firing. 
    (\textit{right}) Output of the CA algorithm on the first repetition, when the test of local independence either uses a first-order test (`First') or a second-order test (`Second'). \Cref{fig:rep-2,fig:rep-3,fig:rep-4,fig:rep-5} in the appendix show the similar plots and graphs for the repetitions two through five. 
    }
    \label{fig:neuron_firing}
\end{figure}
\begin{table}[t]
\centering
\caption{
Consistency of estimated graphs from the $5$ repetitions of the stimulus experiment. 
}
\resizebox{\linewidth}{!}{%
\begin{tabular}{llll}\hline
                                          & Baseline & First    & Second   \\ \hline
Edges consistent in all 5 repetitions     & $6.25\%$ & $23.3\%$ & $\mathbf{26.7\%}$ \\
Edges consistent in a least 4 repetitions & $37.5\%$ & $40.0\%$ & $\mathbf{56.7\%}$ \\ 
Number of edges present                   & $50.0\%$ & $62.7\%$ & $30.7\%$ \\ \hline
\end{tabular}%
}
\label{tab:consistency}
\end{table}

\section{Discussion}
In this paper, we formulated a framework for testing local independence in point processes. We introduced a test of local independence that fits intensities using basis expansions and tests the local independence hypothesis $j \not\rightarrow k \mid C$ by testing significance of contributions of the process $N^j$ to the intensity $\lambda^{k, C\cup\{j\}}$. 

We addressed the issue of marginalization: Even if the full data generating mechanism is a known and simple model class, such as Hawkes processes, a partially observed system with some event types unobserved cannot necessarily be modelled as a Hawkes process. This issue is native to (conditional) local independence testing, since the local independence $j \not\rightarrow k\mid C$ relates to the marginal distribution of $N^{\{j,k\}\cup C}$. 
To overcome this misspecification, we proved that, when facing marginalized variables, the intensity can be arbitrarily well approximated by expansions in terms of iterated integrals, and we have verified that including higher-order interactions leads to an improved level of the test of $j \not\rightarrow k \mid C$. 

The availability of an empirical local independence test is quintessential to constraint-based causal structure learning algorithms for point processes, and we have validated in simulation studies that using our proposed test, one can from data obtain good estimates of the underlying graph. 
We applied our approach to a real-world data set on neuron spiking in turtles, and found that including higher-order interactions resulted in sparser, more informative estimated networks.

\bibliographystyle{IEEEtran}
\bibliography{bibliography}

\clearpage
\appendix
\section{Appendix}
\subsection{Proof of Theorem \ref{thm:representation_0}}
In this appendix, we prove \cref{thm:representation_0}. The proof first shows the result for a univariate process ($|C| = 1$), and then argues that result can easily be extended to the multivariate setting. 

We stress that the motivation for the theorem is to show convergence of the representation. In practice, many other kernel functions than those appearing in the proof, could also be used to describe the system, and so our interest lies very little in the concrete functional forms used. 

Let $\tau_1, \tau_2, \ldots$ be the jumps of $N$ starting at $0$ and moving backwards in time. That is $\ldots < \tau_2 < \tau_1 < 0$. 
\begin{dfn}
	For $s < 0$, let $\mathcal{F}_s$ denote the $\sigma$-algebra generated by events in $[s, 0)$. That is $$\mathcal{F}_s = \sigma(\tau_1 \lor s, \tau_2 \lor s, \ldots),$$
	where $\tau\lor s = \max(\tau,s)$. 
	Define also $\mathcal{F}_{0-} = \sigma(\cup_{s < 0} \mathcal{F}_s)$. 
\end{dfn}
\begin{prop}\label{rep:prop_generator_sigma_alg}
	$\mathcal{F}_0$ equals $\mathcal{F}_{-\infty}:= \sigma(\tau_1, \tau_2, \ldots)$. 
\end{prop}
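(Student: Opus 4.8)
The claim is that $\mathcal{F}_0 = \sigma(\tau_1\lor 0, \tau_2 \lor 0, \ldots)$ coincides with $\mathcal{F}_{-\infty} = \sigma(\tau_1, \tau_2, \ldots)$. One inclusion, $\mathcal{F}_0 \subseteq \mathcal{F}_{-\infty}$, is immediate: each $\tau_i \lor 0$ is a measurable function of $\tau_i$, so $\mathcal{F}_0$ is generated by $\mathcal{F}_{-\infty}$-measurable random variables. The substance is the reverse inclusion $\mathcal{F}_{-\infty} \subseteq \mathcal{F}_0$, i.e.\ recovering each raw jump time $\tau_i$ from the truncated variables $\{\tau_j \lor 0\}_j$ together with knowledge of the event counts in $[0,\infty)$.

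The plan is to exploit the assumption that $N$ is simple and non-exploding. First I would note that, since all $\tau_i < 0$ strictly (the jumps are indexed starting at $0$ and moving \emph{backwards}, so none of them lie in $[0,\infty)$) and the process is non-exploding, only finitely many of the $\tau_i$ fall in any interval $[s,0)$, while all of them are strictly negative. The key observation is that for every $i$, $\tau_i \lor 0 = \tau_i$ because $\tau_i < 0$; hence $\tau_i$ is $\mathcal{F}_0$-measurable for each $i$, and therefore $\mathcal{F}_{-\infty} = \sigma(\tau_1,\tau_2,\ldots) \subseteq \mathcal{F}_0$. This is almost too short, so I would be careful to handle the convention explicitly: if the indexing is instead such that $\tau_i$ may be nonnegative (e.g.\ a two-sided enumeration), then one argues via $\sigma$-algebra generators that $\mathcal{F}_{-\infty}$ is generated by the collection $\{\tau_i \mathbf{1}_{\{\tau_i < -1/n\}} : i, n\}$, each of which is recovered from $\mathcal{F}_s$ for $s < -1/n$, hence from $\mathcal{F}_{0-}$; but for the precise statement $\mathcal{F}_0 = \mathcal{F}_{-\infty}$, the truncation at $0$ is exactly what is needed because no jump sits at or past $0$.

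The main obstacle is therefore not analytical depth but bookkeeping: pinning down precisely which random variables generate $\mathcal{F}_{-\infty}$ and checking that truncation at $0$ loses no information. Concretely I would (i) write $\mathcal{F}_{-\infty} = \sigma\big(\{(\tau_i)_{i\geq 1}\}\big)$, (ii) observe $\tau_i = \tau_i \lor 0$ a.s.\ since $\tau_i < 0$, (iii) conclude $\sigma(\tau_i) \subseteq \sigma(\tau_i\lor 0) \subseteq \mathcal{F}_0$, and (iv) take the union over $i$ and the generated $\sigma$-algebra to get $\mathcal{F}_{-\infty}\subseteq\mathcal{F}_0$; combined with the trivial reverse inclusion this gives equality. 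A remark that the simplicity and non-explosion hypotheses are what guarantee the enumeration $\tau_1 > \tau_2 > \cdots$ is well-defined and measurable would complete the argument.
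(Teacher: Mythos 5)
There is a genuine error at the heart of your argument. The truncation in the paper is $\tau\lor s=\max(\tau,s)$: it censors jumps that occurred \emph{before} time $s$ (deep in the past), not jumps after time $0$. Your key step, ``$\tau_i\lor 0=\tau_i$ because $\tau_i<0$'', is therefore false --- $\max(\tau_i,0)=0$ whenever $\tau_i<0$, so the object $\sigma(\tau_1\lor 0,\tau_2\lor 0,\ldots)$ you take as $\mathcal{F}_0$ would be the trivial $\sigma$-algebra. You have effectively read $\lor$ as $\land$. Relatedly, the definition of $\mathcal{F}_s$ is only given for $s<0$; the object the proposition is really about (and the one used in Theorem~1 and Proposition~2) is $\mathcal{F}_{0-}=\sigma(\cup_{s<0}\mathcal{F}_s)$, not ``plug in $s=0$''. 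The information that must be recovered in the nontrivial inclusion $\sigma(\tau_1,\tau_2,\ldots)\subseteq\mathcal{F}_{0-}$ concerns jumps arbitrarily far in the past, which no single $\mathcal{F}_s$ contains.

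The correct route (the paper's) is a limiting argument in the opposite direction from the one you sketch: for fixed $n$, the generator $\tau_n\lor s$ is $\mathcal{F}_s$-measurable, hence $\mathcal{F}_{0-}$-measurable, for every $s<0$, and $\tau_n\lor s\to\tau_n$ as $s\to-\infty$; a pointwise limit of measurable maps is measurable, so $\tau_n$ is $\mathcal{F}_{0-}$-measurable, giving $\mathcal{F}_{-\infty}\subseteq\mathcal{F}_{0-}$. (The easy inclusion $\mathcal{F}_s\subseteq\mathcal{F}_{-\infty}$ you do get right.) Your fallback argument with $\tau_i\mathbf{1}_{\{\tau_i<-1/n\}}$ is also oriented the wrong way: from $\mathcal{F}_s$ one recovers $\tau_i$ exactly on the event $\{\tau_i>s\}$ --- the \emph{recent} jumps --- so the variables available in $\mathcal{F}_s$ are of the form $\tau_i\lor s$, and full recovery of $\tau_i$ requires letting $s\to-\infty$, which is precisely what the union over $s<0$ in the definition of $\mathcal{F}_{0-}$ provides.
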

\begin{proof}
	For all $i$, $\tau_i \lor s$ is $\sigma(\tau_i)$-measurable, and in particular, $\mathcal{F}_{-\infty}$-measurable. Therefore $\mathcal{F}_{s} = \sigma(\tau_1 \lor s, \ldots) \subseteq \mathcal{F}_{-\infty}$ and so $\mathcal{F}_{0-} = \sigma(\cup_{s<0} \mathcal{F}_s) \subseteq \mathcal{F}_{-\infty}$. 
	
	Reversely, $\tau_n$ is $\mathcal{F}_0$-measurable for each $n$. $\mathcal{F}_{-\infty}$ is the smallest $\sigma$-algebra making all $\tau_n$'s measurable, so $\mathcal{F}_{-\infty} \subseteq \mathcal{F}_0$ will follow. To see that $\tau_n$ is $\mathcal{F}_0$-measureable, consider any $n$. $(\tau_n \lor s) \rightarrow \tau_n$ for $s \rightarrow -\infty$ (potentially with $\tau_n = -\infty)$, and so since $(\tau_n \lor s)$ is $\mathcal{F}_0$-measurable for each $s$, $\tau_n$ is $\mathcal{F}_0$-measurable. 
\end{proof}

\begin{prop}\label{prop:dense_subspace}
	The union of function spaces $\cup_{s < 0} \mathcal{L}^1(\mathcal{F}_s)$ is dense in $\mathcal{L}^1(\mathcal{F}_{0-})$. 
\end{prop}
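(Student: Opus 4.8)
The plan is to prove that $\cup_{s<0}\mathcal{L}^1(\mathcal{F}_s)$ is dense in $\mathcal{L}^1(\mathcal{F}_{0-})$ by a standard martingale-convergence argument. First I would observe that the family $(\mathcal{F}_s)_{s<0}$ is increasing as $s\downarrow-\infty$ is reversed — more precisely, $\mathcal{F}_s\subseteq\mathcal{F}_{s'}$ whenever $s'\le s<0$, since events in $[s,0)$ are a subset of events in $[s',0)$; hence $(\mathcal{F}_{s})$ indexed by $s$ decreasing to $-\infty$ is a filtration, and by \cref{rep:prop_generator_sigma_alg} together with the definition $\mathcal{F}_{0-}=\sigma(\cup_{s<0}\mathcal{F}_s)$ we have $\mathcal{F}_{0-}=\bigvee_{s<0}\mathcal{F}_s$. (One should be slightly careful: the index set is a continuum, but it suffices to work along a countable cofinal sequence $s_n=-n$, so that $\mathcal{F}_{-n}\uparrow\mathcal{F}_{0-}$.)

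Next, take any $X\in\mathcal{L}^1(\mathcal{F}_{0-})$ and fix $\epsilon>0$. Define $X_n=E(X\mid\mathcal{F}_{-n})$. Each $X_n$ is $\mathcal{F}_{-n}$-measurable and integrable, hence lies in $\cup_{s<0}\mathcal{L}^1(\mathcal{F}_s)$. By Levy's upward martingale convergence theorem (the $\mathcal{L}^1$-version, since $X$ is integrable), $X_n\to E(X\mid\mathcal{F}_{0-})=X$ both almost surely and in $\mathcal{L}^1$. Therefore, for $n$ large enough, $\|X-X_n\|_1<\epsilon$, which exhibits an element of $\cup_{s<0}\mathcal{L}^1(\mathcal{F}_s)$ within $\epsilon$ of $X$ in $\mathcal{L}^1$-norm. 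Since $X$ and $\epsilon$ were arbitrary, density follows.

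The main obstacle — really the only point that needs care rather than being routine — is verifying that $\mathcal{F}_{0-}$ is indeed the $\sigma$-algebra generated by the increasing union, so that the conditional expectations $E(X\mid\mathcal{F}_{-n})$ converge to $X$ and not merely to $E(X\mid\mathcal{F}_{0-})$ for some strictly smaller $\sigma$-algebra. This is exactly what \cref{rep:prop_generator_sigma_alg} and the definition of $\mathcal{F}_{0-}$ secure: $\mathcal{F}_{0-}=\sigma(\cup_{s<0}\mathcal{F}_s)$ by definition, and the cofinality of $(-n)_{n\in\mathbb{N}}$ in $(-\infty,0)$ gives $\sigma(\cup_n\mathcal{F}_{-n})=\sigma(\cup_{s<0}\mathcal{F}_s)$. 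With that identification in hand, Levy's theorem applies verbatim and the remaining steps are immediate.

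A remark worth including: this proposition is the bridge from the ``finite-memory'' random variables (those measurable with respect to events in a bounded window $[s,0)$) to all of $\mathcal{L}^1(\mathcal{F}_{0-})$; the subsequent steps of the proof of \cref{thm:representation_0} will then only need to approximate elements of each $\mathcal{L}^1(\mathcal{F}_s)$ by iterated integrals over the finitely many points in $[s,0)$, and finally pass from $\mathcal{L}^1$-approximation to approximation in probability (the Ky Fan metric) using that $\mathcal{L}^1$-convergence implies convergence in probability, plus a truncation argument to handle variables that are only almost surely finite rather than integrable.
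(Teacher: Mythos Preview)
Your proposal is correct and follows essentially the same approach as the paper: take $X\in\mathcal{L}^1(\mathcal{F}_{0-})$, form the conditional expectations $E[X\mid\mathcal{F}_s]$, and invoke L\'evy's upward martingale convergence theorem to get $\mathcal{L}^1$-convergence to $X$. The only cosmetic difference is that the paper works with the full continuous family $(\lambda_s)_{s<0}$ while you pass to the countable cofinal sequence $s_n=-n$; your reference to \cref{rep:prop_generator_sigma_alg} is unnecessary here since $\mathcal{F}_{0-}=\sigma(\cup_{s<0}\mathcal{F}_s)$ holds by definition.
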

\begin{proof}
	Take any $\lambda \in \mathcal{L}^1(\mathcal{F}_{0-})$. By the tower property, $\lambda_s := E[\lambda \mid \mathcal{F}_s] \in \mathcal{L}^1(\mathcal{F}_s)$ and further from the martingale convergence theorem, $(\lambda_s)_{s < 0}$ is a martingale (in $-s$) and $E[\lambda \mid \mathcal{F}_s]$ converges in $\mathcal{L}^1$ to $E[\lambda \mid \mathcal{F}_{0-}] = \lambda$ as $s \to -\infty$. 
	
	Because each $\lambda_s \in \mathcal{L}^1(\mathcal{F}_s) \subseteq \cup_{s<0} \mathcal{L}^1(\mathcal{F}_s)$, it follows that $\cup_s \mathcal{L}^1(\mathcal{F}_s)$ is dense in $\mathcal{L}^1(\mathcal{F}_{0-})$. 
\end{proof}

We now show that for any $\lambda \in \mathcal{L}^1(\mathcal{F}_s)$ and for each $M \in \mathbb{N}$ that $\lambda 1_{N([s, 0)) = M}$ can be written as a sum of integrals of deterministic functions. These integrands will play a role similar to Volterra kernels, but only given the count $N([s, 0))$. We then sum over these terms, to obtain a general representation of $\lambda$. 

It is well known that if $Y \in \mathcal{L}^1(\sigma(X_1, X_2, \ldots))$ for some random variables $X_1, \ldots$, then there exists a measurable map $f$ such that $Y = f(X_1, X_2, \ldots)$. In the case of event times truncated at $s$, $\tau_n \lor s$, this corresponds to that if $\lambda \in \mathcal{L}^1(\mathcal{F}_s)$ there exists a function $f$ such that $$\lambda = f(\tau_1 \lor s, \tau_2 \lor s, \ldots)$$

To obtain an integral representation of $\lambda$, we can utilize this function. 
Define $f_s^n(t_1, \ldots, t_n) = f(t_1, \ldots, t_n, s, s, \ldots)$ as the evaluation of $f$ in $(t_1, \ldots, t_n)$ and then the $s$ in all other entries of the function. We will write $f^n$ if $s$ is clear from the context or even $f(t_1, \ldots, t_n)$.

As a motivation for the below proof, suppose that we knew that exactly one event occurred in the interval $A := [s,0)$, i.e. $\tau_1 \in A, \tau_n \notin A$ for $n\geq 2$. Then one could write:
\begin{align*}
	\lambda &= f(\tau_1\lor s, \tau_2\lor s,\ldots) = f(\tau_1, s, s, \ldots) \\
	&= \int_s^{0-} f(t, s, s, \ldots) N(\dd t) = \int_s^{0-} f^1 N(\dd t)
\end{align*}
This however depends heavily on the assumption that $N(A) = 1$. If instead the interval contained $m$ events, then $\int_s^{0-} f^1(t) N(\dd t) = f^1(\tau_1) + \ldots + f^1(\tau_m)$ which is not equal to $\lambda$ (because in this case $\lambda = f^m(\tau_1, \ldots, \tau_m)$). 

The following proposition devices a procedure, such that one can obtain $f(\tau_1)$ exactly if $N(A) = 1$ and else $0$, using only integrals of deterministic functions. For a function $h(t_1, \ldots, t_n)$, we use the shorthand notation
\begin{align*}
    \int_A h \dd N(t^n) := \int_A\cdots\int_A h(t_1, \ldots, t_n) \dd N(t_1) \cdots \dd N(t_n). 
\end{align*}

\begin{prop}\label{rep:prop_compensation_scheme_t1}
	Assume $N$ is a simple, non-exploding point process. Let $\lambda \in \mathcal{L}^1(\mathcal{F}_s)$ and $A = [s, 0)$.
	Then 
	\begin{align}\label{rep:eq_compensation_scheme_t1}
		\sum_{n=1}^L \beta_n \int_A f(t_1) 1_{D_n} \dd N(t^n) \stackrel{\textrm{a.s.}}{\longrightarrow} \lambda 1_{N(A) = 1} 
	\end{align}
	for $L \longrightarrow \infty$ where $\beta_n = \frac{(-1)^{n-1}}{(n-1)!}, n \geq 1$, and:
	\begin{align}
		D_n = \{(t_1, \ldots, t_n) \in [-s, 0)^n\,\mid \, t_i \neq t_j \text{ for }i \neq j\}\notag
	\end{align}
\end{prop}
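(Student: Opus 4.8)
The plan is to exploit the combinatorial identity hidden in the coefficients $\beta_n = (-1)^{n-1}/(n-1)!$, namely that for any positive integer $m$ we have $\sum_{n=1}^{m} \beta_n \binom{m}{n} n! \big/ \text{(something)} $ collapses to $\mathbf{1}_{m=1}$. More precisely, on the event $\{N(A) = m\}$ with $A = [s,0)$ and distinct event times $\tau_1 > \tau_2 > \dots > \tau_m$ in $A$, the integral $\int_A f(t_1)\mathbf{1}_{D_n}\,\dd N(t^n)$ equals the sum of $f(\tau_{i_1})$ over all ordered $n$-tuples $(\tau_{i_1},\dots,\tau_{i_n})$ of \emph{distinct} event times (the restriction to $D_n$ removes the diagonal). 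Grouping by the value of the first coordinate, this is $\sum_{i=1}^{m} f(\tau_i) \cdot (m-1)(m-2)\cdots(m-n+1) = \sum_{i=1}^m f(\tau_i)\,\frac{(m-1)!}{(m-n)!}$ when $n \le m$, and $0$ when $n > m$. Therefore
\begin{align*}
\sum_{n=1}^{L} \beta_n \int_A f(t_1)\mathbf{1}_{D_n}\,\dd N(t^n) = \left(\sum_{i=1}^{m} f(\tau_i)\right)\sum_{n=1}^{\min(L,m)} \frac{(-1)^{n-1}}{(n-1)!}\cdot\frac{(m-1)!}{(m-n)!},
\end{align*}
and the inner sum, after reindexing $k = n-1$, is $\sum_{k=0}^{\min(L-1,m-1)} (-1)^k \binom{m-1}{k}$, which equals $\mathbf{1}_{m=1}$ as soon as $L \ge m$ (it is the truncated alternating binomial sum, which telescopes to $(-1)^{L-1}\binom{m-2}{L-1}$ for $L \le m-1$ and vanishes for $L \ge m$ when $m \ge 2$, and is trivially $1$ when $m=1$). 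Hence on $\{N(A)=m\}$ the left-hand side of \eqref{rep:eq_compensation_scheme_t1} is eventually (for $L \ge m$) exactly equal to $f(\tau_1)\mathbf{1}_{m=1} = \lambda\,\mathbf{1}_{N(A)=1}$, since on $\{N(A)=1\}$ we have $\lambda = f(\tau_1 \lor s, \tau_2 \lor s, \dots) = f(\tau_1, s, s, \dots) = f^1(\tau_1) = f(\tau_1)$.

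First I would set up notation: fix $\omega$, let $m = N(A)(\omega) < \infty$ (finite by the non-exploding assumption), and enumerate the points. Then I would carefully expand $\int_A f(t_1)\mathbf{1}_{D_n}\,\dd N(t^n)$ as a finite sum over tuples of distinct points, justifying this because $N$ restricted to $A$ is a finite sum of Dirac masses (simplicity gives distinctness of atoms, so $\mathbf{1}_{D_n}$ simply excludes repeated indices). Next I would perform the grouping-by-first-coordinate computation above to get the closed form with the binomial sum. Then I would invoke the elementary identity $\sum_{k=0}^{r}(-1)^k\binom{N}{k} = (-1)^r\binom{N-1}{r}$ to see that for $L-1 \ge m-1$, i.e. $L \ge m$, the partial sum equals $\mathbf{1}_{m=1}$. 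Finally, since for almost every $\omega$ the value $m = N(A)(\omega)$ is finite, the sequence indexed by $L$ is eventually constant and equal to $\lambda(\omega)\mathbf{1}_{N(A)(\omega)=1}$, which gives almost sure convergence (indeed eventual equality, which is stronger).

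The main obstacle I anticipate is purely bookkeeping rather than conceptual: getting the combinatorial factor $(m-1)!/(m-n)!$ exactly right, including the edge cases $n=1$ (factor $1$) and $n=m$ (factor $(m-1)!$), and then matching it against the precise form of $\beta_n$ so that the alternating sum telescopes cleanly. One subtlety worth stating explicitly is why $\lambda = f(\tau_1)$ on $\{N(A)=1\}$ — this uses that $\lambda \in \mathcal{L}^1(\mathcal{F}_s)$ so $\lambda = f(\tau_1\lor s, \tau_2 \lor s, \dots)$ for a measurable $f$, and on $\{N(A)=1\}$ all $\tau_n$ for $n \ge 2$ satisfy $\tau_n < s$, hence $\tau_n \lor s = s$, so $\lambda = f(\tau_1, s, s, \dots) = f^1_s(\tau_1)$ in the notation of the excerpt. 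A second minor point is integrability/finiteness: since $m$ is a.s.\ finite and each integral is a finite sum, no convergence-of-integrals issue arises — the a.s.\ statement reduces to an eventually-constant sequence, so dominated convergence is not even needed. I would close by remarking that the same $f$ and the same kernels $f^1$ work simultaneously, which is what makes the later summation over $M$ (to recover all of $\lambda$) go through.
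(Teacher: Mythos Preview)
Your proposal is correct and follows essentially the same approach as the paper: you count ordered $n$-tuples of distinct points with a fixed first coordinate to obtain the factor $(m-1)!/(m-n)!=\binom{m-1}{n-1}(n-1)!$, combine it with $\beta_n$ to produce the alternating binomial sum $\sum_{k}(-1)^k\binom{m-1}{k}$, and conclude eventual equality via $(1-1)^{m-1}$ and the a.s.\ finiteness of $N(A)$. Your write-up is in fact slightly more detailed than the paper's (you make explicit the partial-sum identity and the reason why $\lambda=f^1(\tau_1)$ on $\{N(A)=1\}$), but the argument is the same.
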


\begin{proof}
	Observe that while we integrate over sequences $(t_1, \ldots, t_n)$, we evaluate only the function $f^1(t_1)$ in $t_1$. The indicator function $1_{D_n}$ still is evaluated in $(t_1, \ldots, t_n)$.
	For this reason
	\begin{align*}
	&\int_A f(t_1) 1_{D_n}(t_1, \ldots, t_n) \dd N(t^n) \\
	&= \left[\int_A f(t_1) \dd N(t_1)\right]\begin{pmatrix}N(A) - 1 \\ n - 1\end{pmatrix}(n-1)!
	\end{align*}
    This follows because for each event time $\tau \in A$, there are exactly $\begin{pmatrix}N(A) - 1 \\ n - 1\end{pmatrix}(n-1)!$ tuples 	$(\tau, t_2, \ldots, t_n)$ where $\tau$ is the first element and no elements are identical. 

It then follows that: 
\begin{align*}
	&\sum_{n=1}^{N(A)} \beta_n \int_A f(t_1) 1_{D_n} \dd N(t^n) \\
	&= \left[\int_A f(t) \dd N(t)\right] \sum_{n=1}^{N(A)} (-1)^{n-1} \begin{pmatrix}N(A)-1\\ n-1\end{pmatrix} \\
	&= \begin{cases}
		f(\tau_1) & N(A) = 1 \\
		0 & \textrm{else}
	\end{cases} = \lambda 1_{N(A) = 1}
\end{align*}
This last step is utilizes that for $M = 1$, 
$\sum_{n=1}^M (-1)^{n-1} \binom{M-1}{n-1} = 1$, and for $M > 1$, the binomial formula implies that
\begin{align*}
    0 &= \left(1 + (-1)\right)^{M-1} \\
    &=\sum_{n=0}^{M-1} (-1)^n \begin{pmatrix}M-1 \\ n\end{pmatrix} \\
    &=\sum_{n=1}^{M} (-1)^{n-1} \begin{pmatrix}M-1 \\ n-1\end{pmatrix}.
\end{align*}
Since the integrand $1_{D_n}$ is $0$ for $n \geq N(A)$, and $P\left(N(A) < \infty\right) = 1$, it follows that 
$$
\sum_{n=1}^{L} \beta_n \int_A f(t_1) 1_{D_n} \dd N(t^n) \stackrel{\textrm{a.s.}}{\longrightarrow} \lambda 1_{N(A) = 1} \textrm{ for } L \rightarrow \infty
$$
\end{proof}
This extends to the following corollary:
\begin{cor}\label{rep:cor_compensation_scheme_t_n}
	Let $\lambda\in \mathcal{L}^1(\mathcal{F}_s)$. For $M \in \mathbb{N}$, one has:
	\begin{align}
		\sum_{n=M}^L \beta_n^M \int_A f(t_1, \ldots, t_M) 1_{D_n} 1_{O_M} \dd N(t^{n}) \stackrel{\textrm{a.s.}}{\longrightarrow} \lambda 1_{N(A) = M} \notag
	\end{align}
	with $\beta_n^M = \frac{(-1)^{n-M}}{(n-M)!}$ for $n \geq M$ and
	\begin{align}
		O_M = \{(t_1, \ldots, t_M) \in [-s, 0)^n\,\mid \, t_1 < t_2 \ldots < t_M \}\notag
	\end{align}
\end{cor}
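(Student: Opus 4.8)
\textbf{Proof proposal for Corollary \ref{rep:cor_compensation_scheme_t_n}.}

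The plan is to mimic the combinatorial argument of Proposition \ref{rep:prop_compensation_scheme_t1}, now isolating the ordered $M$-tuple $(\tau_{i_1}, \ldots, \tau_{i_M})$ of event times with $\tau_{i_1} < \cdots < \tau_{i_M}$ rather than just the single earliest event $\tau_1$. First I would observe that, because the function $f(t_1, \ldots, t_M)$ appearing in the integrand depends only on the first $M$ coordinates and the constraint $1_{O_M}$ forces $t_1 < \cdots < t_M$, while $1_{D_n}$ only forces all $n$ coordinates to be distinct, the integral factorizes just as in the $M=1$ case. Concretely, for fixed $n \geq M$,
\begin{align*}
    &\int_A f(t_1, \ldots, t_M) 1_{D_n} 1_{O_M} \dd N(t^n) \\
    &\quad = \left[\sum_{\substack{i_1 < \cdots < i_M \\ \tau_{i_\ell} \in A}} f(\tau_{i_1}, \ldots, \tau_{i_M})\right] \binom{N(A) - M}{n - M} (n - M)!,
\end{align*}
since once the first $M$ coordinates are pinned to an ordered $M$-subset of the $N(A)$ event times in $A$, the remaining $n - M$ coordinates range over ordered $(n-M)$-tuples of the remaining $N(A) - M$ event times.

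Next I would sum over $n$ from $M$ to $N(A)$ with the weights $\beta_n^M = \frac{(-1)^{n-M}}{(n-M)!}$. The factorials cancel, leaving the bracketed sum multiplied by $\sum_{n=M}^{N(A)} (-1)^{n-M} \binom{N(A)-M}{n-M}$. Reindexing $m = n - M$, this inner sum is $\sum_{m=0}^{N(A)-M} (-1)^m \binom{N(A)-M}{m}$, which equals $(1 + (-1))^{N(A)-M}$: this is $1$ when $N(A) = M$ and $0$ when $N(A) > M$, exactly as in the proof of Proposition \ref{rep:prop_compensation_scheme_t1}. When $N(A) = M$ the bracketed sum is just $f(\tau_1, \ldots, \tau_M)$ evaluated at the (unique, ordered) collection of the $M$ event times in $A$, which by the definition of $f$ as the measurable representative of $\lambda$ on $\mathcal{F}_s$ equals $\lambda$ on the event $\{N(A) = M\}$. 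Hence the finite sum (truncated at $L = N(A)$) already equals $\lambda 1_{N(A) = M}$ exactly. Finally, since $1_{D_n} 1_{O_M}$ vanishes as soon as $n > N(A)$ (there are no $n$ distinct points among $N(A)$ events), and $P(N(A) < \infty) = 1$ by the non-explosion assumption, the partial sums are almost surely eventually constant and equal to $\lambda 1_{N(A) = M}$, giving the claimed almost sure convergence as $L \to \infty$.

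The only genuinely delicate point — and the step I expect to require the most care — is the bookkeeping in the factorization identity: one must check that the evaluation of $f$ in only the first $M$ arguments, combined with the asymmetry between the strict ordering $1_{O_M}$ on those $M$ arguments and the mere distinctness $1_{D_n}$ on all $n$, produces exactly the binomial coefficient $\binom{N(A)-M}{n-M}(n-M)!$ and no spurious over- or under-counting (in particular, that the $M$ chosen coordinates are forced to be precisely the $M$ smallest among the selected $n$ only inasmuch as $O_M$ orders them internally, but they need not be globally smallest — yet this does not matter, because every ordered $M$-subset of the $N(A)$ events is counted once with the correct multiplicity). Everything else is a routine transcription of the $M=1$ argument, using the binomial theorem and the non-explosion hypothesis exactly as before.
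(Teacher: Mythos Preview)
Your proposal is correct and follows essentially the same route as the paper: both factor the $n$-fold integral as the ordered $M$-fold integral times the count $\binom{N(A)-M}{n-M}(n-M)!$, then invoke the alternating binomial sum and the non-explosion assumption exactly as in Proposition~\ref{rep:prop_compensation_scheme_t1}. Your write-up is in fact more detailed than the paper's, which simply states the factorization identity and defers to the $M=1$ case.
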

\begin{proof}
	The case $M = 1$ is covered in \cref{rep:prop_compensation_scheme_t1}. For $M \geq 2$, the result essentially is the same, with the additional requirement that the first $M$ jumps should be ordered, which is handled by $1_{O_M}$.

	Apart from this, combinatorics of how many tuples $(t_1, \ldots, t_n)$ with $t_1 < \ldots < t_M$ ordered (as fixed by $O_M$) and all $t$'s distinct (by $D_n$) remains the same, in particular
	\begin{align*}
	&\int_A f(t_1, \ldots, t_M) 1_{D_n} 1_{O_M} \dd N(t^n)\\
	&= \left[\int_A f(t_1, \ldots, t_M) 1_{O_M} \dd N(t^M)\right]\begin{pmatrix}N(A) - M \\ n - M\end{pmatrix}(n-M)!
	\end{align*}
	Consequently, the proof from \cref{rep:prop_compensation_scheme_t1} also applies in the case of $1_{N(A) = M}$. 
\end{proof}
Extending further on \cref{rep:prop_compensation_scheme_t1} and Corollary \ref{rep:cor_compensation_scheme_t_n}, we may include the base-rate $\lambda 1_{N(A) = 0}$. Let $h^0$ be the value of $\lambda$ on the set $\{N(A) = 0\}$ (that is $h^0 = f(s, s, \ldots)$). Now $\sum_{n=1}^L \int_A \left[f(t_1) - h^0 \right] 1_{D_n} \dd N(t^{n})$ will return the \textit{additional to base-rate} intensity $f(\tau_1) - h^0$ if $N(A) = 1$ and $0$ else.
We combine the above:
\begin{prop}\label{rep:prop_compensation_scheme_any_number}
	Assume $N$ is a non-exploding point process, and assume $\lambda \in \mathcal{L}^1(\mathcal{F}_s)$. Then
	\begin{equation*}
		\sum_{M=1}^L \sum_{n=M}^L \beta_n^M \int_A f(t_1, \ldots, t_M) 1_{D_n} 1_{O_M} \dd N(t^n)\stackrel{\textrm{a.s.}}{\longrightarrow}\lambda,
	\end{equation*}
	for $L \longrightarrow \infty$. 
\end{prop}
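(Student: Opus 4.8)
The plan is to deduce the statement from \cref{rep:cor_compensation_scheme_t_n} by decomposing according to the (a.s.\ finite) value of $N(A)$, so that the nested limit in the double sum collapses to a fixed finite sum on each such event. Since $N$ is non-exploding, $N(A) < \infty$ almost surely; I fix a sample path outside the exceptional null set and set $M_0 := N(A) \in \{0,1,2,\dots\}$.

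First I would observe that $1_{D_n}(t_1,\dots,t_n)$ vanishes unless $t_1,\dots,t_n$ are pairwise distinct, and that $N$ restricted to $A$ has exactly $M_0$ atoms; hence $\int_A f(t_1,\dots,t_M) 1_{D_n} 1_{O_M} \dd N(t^n) = 0$ whenever $n > M_0$, and the factor $1_{O_M}$ makes the whole term vanish whenever $M > M_0$. Therefore, for every $L \ge M_0$ the double sum in the statement equals the finite expression $\sum_{M=1}^{M_0}\sum_{n=M}^{M_0} \beta_n^M \int_A f(t_1,\dots,t_M) 1_{D_n} 1_{O_M} \dd N(t^n)$, which no longer involves $L$ (the empty sum $0$ when $M_0 = 0$); in particular the pathwise limit as $L \to \infty$ exists and is this expression.

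Next I would invoke \cref{rep:cor_compensation_scheme_t_n}: for each fixed $M \ge 1$ the inner series $\sum_{n=M}^{\infty} \beta_n^M \int_A f(t_1,\dots,t_M) 1_{D_n} 1_{O_M} \dd N(t^n)$ — which, by the vanishing above, is really the finite sum $\sum_{n=M}^{M_0}$ on the chosen path — equals $\lambda 1_{\{N(A)=M\}}$. On the event $\{N(A)=M_0\}$ with $M_0 \ge 1$ this is $\lambda$ for $M = M_0$ and $0$ for $1 \le M < M_0$, so the finite expression above equals $\lambda$ and the asserted convergence holds there. The one remaining case is the void event $\{N(A)=0\}$, where every integral is $0$ while $\lambda = f(s,s,\dots) = h^0$; this is exactly the base-rate term isolated in the discussion preceding the statement, and adding the constant $h^0 1_{\{N(A)=0\}}$ (equivalently, running the expansion with $f - h^0$ and restoring $h^0$) supplies the missing contribution, yielding convergence to $\lambda$ on all of $\{N(A) < \infty\}$, an event of probability one.

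The substantive work is entirely contained in \cref{rep:cor_compensation_scheme_t_n}; what remains is bookkeeping. The step I expect to require the most care is the handling of the double limit: one must resist interchanging two infinite limits and instead use non-explosion to see that, for $L$ large relative to $N(A)$, the double sum is already stationary in $L$, so that only a finite summation of the identities from \cref{rep:cor_compensation_scheme_t_n} over the relevant indices $M$ is needed. The secondary subtlety is the value on the void event, which the base-rate constant $h^0$ takes care of, as indicated above.
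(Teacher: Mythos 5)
Your proof is correct and follows essentially the same route as the paper's: decompose on the a.s.\ finite value of $N(A)$, observe that the double sum is stationary in $L$ once $L \geq N(A)$, and apply \cref{rep:cor_compensation_scheme_t_n} termwise to the finitely many nonvanishing indices $M$. Your explicit treatment of the void event $\{N(A)=0\}$ --- where the displayed sum is zero and the base-rate constant $h^0$ must be supplied separately --- is in fact slightly more careful than the paper's one-line argument, which invokes the decomposition $\lambda = \lambda 1_{N(A)=0} + \sum_{M} \lambda 1_{N(A)=M}$ but leaves the $h^0$ contribution implicit in the stated formula.
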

\begin{proof}
	As above, the almost sure convergence follows simply by decomposing $\lambda = \lambda 1_{N(A) = 0} + \sum_{M \in \mathbb{N}} \lambda 1_{N(A) = M}$, and again observing that since $P(N(A) < \infty)$, for every $\omega$, the left hand side will arrive at the true value for some finite $L$. 
\end{proof}

Finally we are able to prove the main result.
\begin{proof}[Proof of \cref{thm:representation_0}]
Observe that each function $\beta_n^M f 1_{D_n} 1_{O_M} \in E_n$, and 
so $$\sum_{M=1}^L \sum_{n=M}^L \beta_n^M \int_A f(t_1, \ldots, t_M) 1_{D_n} 1_{O_M} \dd N(t^n)$$ is in $\bigoplus_{n=0}^L W_n$. 
Be reminded that by \cref{prop:dense_subspace}, $\cup_{s < 0}\mathcal{L}^1(\mathcal{F}_s)$ is ($\mathcal{L}^1$-)dense in $\mathcal{L}^1(\mathcal{F}_{0-})$, and for every element $\lambda$ of $\cup_{s< 0}\mathcal{L}^1(\mathcal{F}_s)$ there exists a sequence in $\bigoplus_{n \in \mathbb{N}} W_n$ converging almost surely to $\lambda$. Consequently, as both $\mathcal{L}^1$ and almost sure convergence implies convergence in probability, for any $\lambda \in \mathcal{L}^1(\mathcal{F}_{0-})$ there exist a sequence in $\bigoplus_{n\in\mathbb{N}} W_n$ converging to $\lambda$ in probability. 

Consider now any $\lambda \in \mathcal{L}(\mathcal{F}_{0-}^C)$ with $|\lambda| < \infty$ a.s. Trivially $\lambda_k := 1_{|\lambda| < K} \lambda$ converges in probability to $\lambda$ for $k\to\infty$. Further each $\lambda_k \in \L^1(\mathcal{F}_{0-})$, and hence there exists a sequence there exists a sequence in $\bigoplus_{n \in \mathbb{N}} W_n$ converging almost surely to $\lambda$, completing the proof in the case without marks. 
\end{proof}
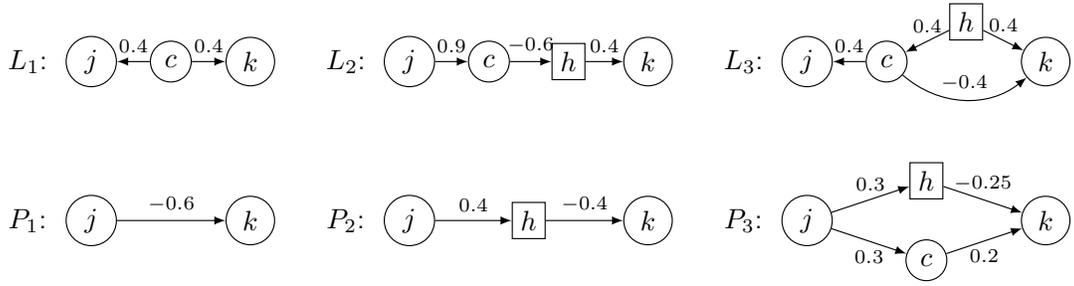
\begin{figure*}[t]
    \centering
    \resizebox{0.95\textwidth}{!}{%
		\begin{tikzpicture}[>=latex,font=\sffamily,every node/.style={inner sep = 3pt}]
			\node[draw, circle] (A) at (4, 0) {$j$};
		    \node[draw, circle] (C) at (5,0) {$c$};
		    \node[draw, circle] (B) at (6,0) {$k$};
		    \path [->] (C) edge node[above] {\scriptsize $0.4$} (A);
		    \path [->] (C) edge node[above] {\scriptsize$0.4$} (B);

			\node[draw, circle] (A) at (8, 0) {$j$};
		    \node[draw, circle] (C1) at (9,0) {$c$};
		    \node[draw] (C2) at (10,0) {$h$};
		    \node[draw, circle] (B) at (11,0) {$k$};
		    \path [->] (A) edge node[above] {\scriptsize $0.9$} (C1);
		    \path [->] (C1) edge node[above] {\scriptsize $-0.6$} (C2);
		    \path [->] (C2) edge node[above] {\scriptsize $0.4$} (B);
		    \node[draw, circle] (A) at (13, 0) {$j$};
		    \node[draw, circle] (C1) at (14,0) {$c$};
		    \node[draw] (C2) at (15,0.5) {$h$};
		    \node[draw, circle] (B) at (16,0) {$k$};
		    \path [->] (C1) edge node[above] {\scriptsize $0.4$} (A);
		    \path [->] (C2) edge node[above] {\scriptsize $0.4$} (C1);
		    \path [->] (C2) edge node[above] {\scriptsize $0.4$} (B);
		    \path [->] (C1) edge[bend right=40] node[above] {\scriptsize $-0.4$} (B);
		    \node at (3.2, 0) {$L_1$:};
		    \node at (7.2, 0) {$L_2$:};
		    \node at (12.2, 0) {$L_3$:};

			\node[draw, circle] (A) at (4, -2) {$j$};
		    \node[draw, circle] (B) at (6, -2) {$k$};
		    \path [->] (A) edge node[above] {\scriptsize $-0.6$} (B);
		    \node[draw, circle] (A) at (8, -2) {$j$};
		    \node[draw] (C2) at (9.5,-2) {$h$};
		    \node[draw, circle] (B) at (11, -2) {$k$};
		    \path [->] (A) edge node[above] {\scriptsize $0.4$} (C2);
		    \path [->] (C2) edge node[above] {\scriptsize $-0.4$} (B);

			\node[draw, circle] (A) at (13, -2) {$j$};
		    \node[draw, circle] (C1) at (14.5, -2.5) {$c$};
		    \node[draw] (C2) at (14.5, -1.5) {$h$};
		    \node[draw, circle] (B) at (16, -2) {$k$};
		    \path [->] (A) edge node[below] {\scriptsize $0.3$} (C1);
		    \path [->] (A) edge node[above] {\scriptsize $0.3$} (C2);
		    \path [->] (C1) edge node[below] {\scriptsize $0.2$} (B);
		    \path [->] (C2) edge node[above] {\scriptsize $-0.25$} (B);

		    \node at (3.2, -2) {$P_1$:};
		    \node at (7.2, -2) {$P_2$:};
		    \node at (12.2, -2) {$P_3$:};
		\end{tikzpicture}}
    \caption{Simulation parameters for the experiment in \cref{subsec:level-power}}
    \label{fig:structures-details}
\end{figure*}
The above framework is readily extended to marked point processes. Remember that with $V = \{1, \ldots, d\}$ and $C \subseteq V$, one has for any Borel measurable set $A$ that:
\begin{align}
	N(A\times C) = \sum_{v\in C}N(A\times\{v\}) = \sum_{v\in C} N^{v}(A)\notag
\end{align}
When integrating, this factorizes:
\begin{align*}
	&\int_{A\times C} f(x, v) N(\dd x, \dd v) \\
	&=  \int_{A} f(x, v) \sum_{v \in C} N^{v}(\dd x)\\
	&= \sum_{v\in C}\int_A f^{v}(x) N^{v}(\dd x)
\end{align*}
where we let $f^{v}(x) := f(x, v)$. Similarly in higher dimensions:
\begin{align*}
	&\int_{A\times C} f(x_1, v_1, \ldots, x_n, v_n) N(\dd x^n \times \dd v^n)\\
	&= \sum_{|\alpha|= n}\int_{A\times C}  f^\alpha(x_1, \ldots, x_n) \underbrace{N^{\alpha_1}(x^1)\cdots N^{\alpha_n}(x^n)}_{=: N^\alpha(\dd x^n)}
\end{align*}
where $f^\alpha(x_1, \ldots, x_n) = f(x_1, \alpha_1, \ldots x_n, \alpha_n)$ and $\alpha \in V^n$ is some tuple of length $n$.

Thus, the combinatorics of the one-dimensional case apply also in the marked setting and the result thus directly transfers to the multi-dimensional case:
In the marked setting, the generated $\sigma$-field becomes $\mathcal{F}_s = \sigma((\tau_1 \lor s, v_1 1_{\tau_1 > s}), \ldots )$. A multivariate version of \cref{rep:prop_generator_sigma_alg} follows because $(\tau_1\lor s, v_11_{\tau_1>s}) \rightarrow (\tau_1, v_11_{\tau_1 > -\infty})$\footnote{Which is the desired limit, with the convention  that $v_n = 0$ if $\tau_n = -\infty$.}, and so denseness of $\cup_s \mathcal{L}^1(\mathcal{F}_s)$ also follows in the marked case.
	Thus the function $f$ could have been written:
\begin{align}
	\lambda = f\left((\tau_1 \lor s, v_1 1_{\tau_1 > s}), \ldots\right)\notag
\end{align}
In \cref{rep:prop_compensation_scheme_t1}, one could have proceeded in exactly the same way, but using integrals $\int_{A\times V} f(t_1, v_1) N(\dd t_1 \times v_1)$ instead.

Therefore also \cref{rep:prop_compensation_scheme_any_number} generalizes such that any $\lambda \in \mathcal{L}^1(\mathcal{F}_s)$ can be approximated by an almost surely converging sequence, and combined with the denseness result, the result extends to the multivariate case. 

\subsection{Simulation details}\label{appendix:sim-details}
In this section, we provide simulation details for the experiments in \cref{sec:simulation}.
\subsubsection{Details from \texorpdfstring{\cref{subsec:level-power}}{}}
Recall that from each structure, we sampled point processes with kernels $g^{i_1i_2}(s) = \alpha_{i_1i_2}\beta_{i_1i_2}e^{-\beta_{i_1i_2} s}$ if $(i_1, i_2) \in \mathcal{G}$ and otherwise $g^{i_1i_2}(s)=0$. We simulated data using the link-function $\eta(x) = 1_{x\geq 1}\cdot x + 1_{x<1}\cdot (\log(x)+1)$.

For all structures and edges, the decay parameter $\beta_{i_1i_2}$ is $0.8$, the baseline intensity is $\beta_0 = 0.25$ and the rate parameter on self-edges is $\alpha_{i_1i_1} = 0.4$.
The remaining rate parameters $\alpha_{i_1i_2}$ are given in \cref{fig:structures-details}. 

\subsubsection{Details from \texorpdfstring{\cref{subsec:sim-eca}}{}}\label{subsec:details-eca}
All graphs are sampled randomly with all self-edges present and all other edges sampled with a probability of an edge occuring at $p=0.2$. 
Given the graph, Hawkes processes are sampled with kernels $g^{i_1i_2}(s) = \alpha_{i_1i_2}\beta_{i_1i_2}e^{-\beta_{i_1i_2} s}$ if $(i_1, i_2) \in \mathcal{G}$ and otherwise $g^{i_1i_2}(s)=0$, and again using the link function $\eta(x) = 1_{x\geq 1}\cdot x + 1_{x<1}\cdot (\log(x)+1)$. 
The decay parameter is $\beta^{i_1i_2} = 0.8$, the baseline intensities $\beta_0^{i_1} = 0.25$, and for self-edges the rate parameter is $\alpha_{i_1i_1}=0.3$. The rate parameters between two different nodes is $s\cdot 0.4$ where $P(s=1) = P(s=-1) = 1/2$. 

\subsection{Estimated graphs for remaining 4 experiments}
\Cref{fig:neuron_firing} in \cref{sec:real-world} we displayed data and resulting estimated graphs from the first repetition in an experiment that was repeated $5$ times.  
This section contains plots similar to \cref{fig:neuron_firing}, but for the other $4$ repetitions. These are displayed in \cref{fig:rep-2,fig:rep-3,fig:rep-4,fig:rep-5}.
\begin{figure}[ht]
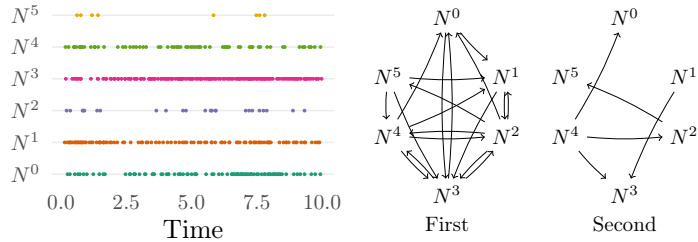

	\centering
    \subfloat{
    \resizebox{0.3\linewidth}{!}{%
}
    }
    \caption{Data and estimated graphs from repetition 5}
    \label{fig:rep-5}
\end{figure}

\end{document}